\documentclass[lettersize,journal]{IEEEtran}
\usepackage{amsmath,amsfonts}
\usepackage{algorithmic}
\usepackage{algorithm}
\usepackage{array}
\usepackage{textcomp}
\usepackage{stfloats}
\usepackage{url}
\usepackage{verbatim}
\usepackage{graphicx}
\usepackage{cite}
\usepackage{hyperref}
\usepackage{amsthm}
\usepackage{caption}         
\usepackage{subcaption}      
\usepackage{booktabs} 
\usepackage{multirow} 
\usepackage{tcolorbox}               
\usepackage{enumitem}                
\newtheorem{theorem}{Theorem}

\usepackage{balance}
\begin{document}
\title{EIFL: Efficiently Protecting Global Model Privacy and Integrity Against an Untrusted Server in Federated Learning}
\author{Zehui Liao, Qiang Li and Binghui Wang,~\IEEEmembership{Senior Member,~IEEE}
\thanks{Zehui Liao and Qiang Li are with the College of Computer Science and Technology, Jilin University, Changchun, China. (E-mail: liaozh25@mails.jlu.edu.cn)}
\thanks{Binghui Wang is with the Department of Computer Science, Illinois Institute of Technology, Chicago, IL, USA. Wang is the corresponding author (bwang70@iit.edu).}}


\maketitle

\begin{abstract}
Federated learning (FL) typically adopts a server-client architecture, where the server aggregates clients' local models (i.e., the input) and returns the aggregated global model (i.e., the output) to clients. An untrusted server may return a tampered global model to compromise the output integrity. Some existing schemes focus on verifying the output integrity. However, these schemes mostly rely on clients pre-negotiating a set of identical auxiliary information among themselves and keeping it confidential from the server. This dependency creates a verification vulnerability: if the auxiliary information is leaked, the verification method may be circumvented. Additionally, in some privacy-sensitive scenarios (e.g., commercial federated learning), the global model may need to be kept confidential from an untrusted server. However, only a few works achieve the output privacy while addressing verification vulnerability, at the cost of prohibitive computation and communication overhead. To address these challenges simultaneously, we propose a novel \emph{provably} privacy-preserving FL method called EIFL. Specifically, we adopt a two-stage aggregation and combine it with symmetric encryption to protect the output privacy. To address the verification vulnerability, we propose an efficient verification method based on vector inner product for output integrity, and a random vector generation method for clients to agree on auxiliary information. EIFL innovatively binds the auxiliary information to the output integrity, and eliminates the need to keep the auxiliary information confidential from the server. Moreover, EIFL is robust against client dropout during the verification phase through a simple resending operation. Evaluation results validate the advantages of EIFL over state-of-the-art schemes in terms of computation and communication overhead.
\end{abstract}

\begin{IEEEkeywords}
Privacy-preserving, secure aggregation, malicious security, federated learning.
\end{IEEEkeywords}

\section{Introduction}

Federated learning (FL) \cite{DBLP:cite1} is an emerging collaborative learning framework that aims to address the data sharing problem.
In FL, the participating clients train their models locally, and only share the trained local models, instead of the raw data with a center server.   
The server updates the global model by aggregating received client models, and broadcasts the updated global model to all participating clients.

Though FL was originally designed to protect clients' data privacy, existing works \cite{DBLP:cite2,DBLP:cite4,DBLP:cite40} have shown that the shared local models can also leak clients' sensitive information. Secure aggregation \cite{DBLP:cite5} addresses this issue, which enables the server to obtain only the aggregated global model without revealing clients’ local models. However, when an untrusted server undertakes the aggregation task, this raises additional concerns for clients: (1) clients may be concerned that the untrusted server will return a tampered global model \cite{DBLP:cite7,DBLP:cite8,VCD-FL,vsersa}, and (2) in some privacy-sensitive scenarios such as commercial FL, clients might be unwilling to expose the commercially used global model to an untrusted server \cite{PPDL,PEVSA}.

Some schemes \cite{DBLP:cite8,SVFL,PVFL,DBLP:cite7,VOSA,vsersa} focus on verifying the {\bf output integrity} (i.e., preventing the global model from being tampered with by the untrusted server). However, these schemes rely on clients pre-negotiating identical auxiliary information among themselves and keeping it confidential from the server \cite{PEVSA}. For example, VERSA \cite{vsersa} requires all clients to hold the same secret vector and keep it confidential from the server. Once the auxiliary information is leaked, the malicious server may forge an aggregated output and bypass the output integrity verification. To address this potential vulnerability, VCD-FL \cite{VCD-FL} moves the negotiation of auxiliary information to the phase after the server’s computation. In essence, this scheme still relies on the confidentiality of auxiliary information, because the auxiliary information has not yet been generated when the server aggregates the global model. At the cost, the scheme also requires additional peer-to-peer communication among clients to negotiate auxiliary information, which incurs substantial communication overhead. Some other schemes aim to encrypt aggregated results based on secure aggregation to achieve {\bf output privacy} (i.e., preventing the untrusted server from accessing the global model). Classical schemes \cite{PEVSA} use homomorphic encryption to achieve aggregation over ciphertexts. Recently, some schemes \cite{VCD-FL,VFL,SWAS} have introduced Lagrange interpolation to achieve more efficient output privacy protection. However, the aforementioned schemes incur substantial computation overhead.

We focus on the output privacy and verification vulnerability of output integrity in this paper. 
We address the above challenges by proposing a novel \emph{provably}  privacy-preserving FL method, called EIFL. Specifically, EIFL adopts a two-stage aggregation and combines it with symmetric encryption to protect the output privacy, where the server only aggregates an intermediate result, while the computation to obtain the global model is performed locally by clients.
To address the verification vulnerability of output integrity, EIFL introduces an efficient verification method based on vector inner product and a random vector generation method. 
Specifically, clients in EIFL use the hash value of the intermediate result from the server as the seed of a pseudorandom generator ({\bf PRG}) to agree on a random vector. Clients then compute the inner product between the random vector and their local model vectors and exchange these inner product values via the server. After locally aggregating to obtain the global model, each client calculates the inner product between the global model vector and the random vector, and verifies the output integrity by comparing this value with the sum of all inner products uploaded by other clients.
Benefiting from the collision resistance of SHA-256 hash function and the randomness of the ${\bf PRG}$, our verification method can detect the server's malicious behavior with an extremely high probability when the server returns an incorrect intermediate result. To the best of our knowledge, EIFL makes the first attempt to disclose the auxiliary information, and innovatively binds the auxiliary information to the output integrity. Compared with existing methods, EIFL does not require additional communication rounds, nor does it require hiding the auxiliary information from the server.
Furthermore, EIFL can also handle dropout clients during the verification phase through a simple resending operation.

We emphasize that all these security goals are theoretically guaranteed.  
Evaluation results validate the effectiveness and efficiency of EIFL and its advantages over the state-of-the-art schemes in terms of computation and communication overheads. Our main contributions are summarized as follows:
\begin{itemize}
\item We adopt a two-stage aggregation and combine it with symmetric encryption to protect the output privacy, which has extremely low computation overhead.
\item We propose an efficient verification method for output integrity based on vector inner product, the communication overhead of which is independent of the dimension of the model vector.
\item We design an efficient random vector generation method for clients to agree on a random vector. This method makes the first attempt to disclose the auxiliary information, and innovatively binds the auxiliary information to the output integrity to address the verification vulnerability. 
\item We introduce a simple resending operation to render EIFL robust against client dropout during the verification phase.
\end{itemize}

The remainder of this paper is structured as follows. Section \ref{sec1} reviews some related work. Section \ref{pre} introduces the related preliminaries. Next, Section \ref{Design} details the various stages of EIFL. Afterwards, Sections \ref{Analysis} and \ref{Evaluation} detail the comprehensive theoretical analysis and evaluation, respectively. Finally, Section \ref{conclusion} concludes our work.

\section{Related Work}
\label{sec1}
Federated learning (FL) is a widely adopted privacy-preserving distributed learning framework, which enables collaborative model training across edge clients without exposing their raw private data. However, practical FL deployment still faces critical challenges in privacy preservation of local updates, robustness against Byzantine attacks from malicious clients, and security guarantees against untrusted aggregation servers. Existing studies have addressed these challenges along three main technical lines: foundational secure aggregation schemes, Byzantine-robust aggregation against adversarial clients, and enhanced secure aggregation for untrusted server scenarios. We systematically review representative works in these three directions below, and analyze their limitations to motivate our proposed EIFL scheme.

\subsection{Secure Aggregation}
Secure aggregation in federated learning (FL) refers to the aggregate computation of the sum of local models in a secure way without learning any information about the local models. 
Existing schemes can be mainly divided into three categories: differential privacy (DP)~\cite{pathak2010multiparty,DBLP:cite23,hamm2016learning,mcmahan2018learning,wei2020federated}, secure multi-party computation (MPC)~\cite{SecureML,DBLP:cite5,DBLP:cite39}, and homomorphic encryption (HE)~\cite{PPDL,DBLP:cite14,PEVSA}. For example, \cite{wei2020federated} proposed a novel DP FL framework, in which carefully designed noises are added to the local client models before aggregation. However, the current DP-based schemes have high utility losses. MPC-based schemes ensure local clients and the server to jointly complete the aggregation without disclosing the clients' private data. The first MPC-based secure aggregation proposed a double-mask protocol \cite{DBLP:cite5} to protect clients' local model vectors. Subsequent work \cite{DBLP:cite39} presented a more communication-efficient version based on a k-neighbors network. However, the excessive number of communication rounds required for protocol execution is a fundamental inherent limitation of the MPC-based secure aggregation schemes. The work \cite{PPDL} used additively homomorphic encryption in asynchronous stochastic gradient descent training for a neural network. Since the server performs aggregation operations directly on the encrypted ciphertexts without decryption, it cannot access any sensitive information associated with the participating clients' local models, thus guaranteeing the privacy of client-side data. However, the inherently prohibitive computation overhead of homomorphic encryption remains a critical fundamental challenge for its large-scale application in federated learning.

\subsection{Secure Aggregation Against Untrusted Server} 
Traditional secure aggregation schemes allow the server to aggregate the global model without revealing clients' local models. Considering potential malicious behaviors of untrusted servers and some privacy-sensitive scenarios, some variant schemes have been proposed to protect output privacy and integrity.

For instance, some schemes aim to encrypt aggregated results to protect them from being accessed by the untrusted server. Classical schemes \cite{PPDL,DBLP:cite14,PEVSA} use homomorphic encryption to achieve aggregation over ciphertexts. Recently, some schemes \cite{VCD-FL,VFL,SWAS} have introduced Lagrange interpolation to achieve more efficient output privacy protection. However, the aforementioned schemes incur substantial computation overhead. 

Some schemes focus on verifying aggregated results to prevent untrusted servers from compromising output integrity. Mainstream schemes use homomorphic hash \cite{DBLP:cite8,SVFL,PVFL} or bilinear pairing \cite{DBLP:cite7,VOSA} to achieve verifiability. Some schemes \cite{vsersa,likeversa} introduce random transformations to achieve more efficient verification. However, the aforementioned schemes rely on clients pre-negotiating identical and confidential auxiliary information, which posed a serious security threat. VCD-FL \cite{VCD-FL} addresses this issue by moving the negotiation of auxiliary information to the phase after the server's computations. However, its peer-to-peer negotiation method incurs substantial communication overhead. 

\subsection{Byzantine-robust Aggregation Against Malicious Clients}
Byzantine-robust aggregation in FL is an inherent challenge, in which malicious clients can send poisoned updates to the aggregation server to influence the global model. Existing schemes \cite{DBLP:cite33,DBLP:cite34,DBLP:cite35,DBLP:cite36,Multi-metrics,MESAS,FLAME,Lockdown,rlr,FLTrust} typically employ statistical analysis to filter out or clip malicious model updates before aggregating them, and assume the existence of a reliable aggregation server. Krum \cite{DBLP:cite36} is the first solution to defend against Byzantine attacks. Specifically, Krum selects multiple reliable local model updates by identifying those with the smallest sum of squared Euclidean distances to others. \cite{DBLP:cite35} proposed a median-based aggregation, where the server obtains the $j$-th parameter of its global model by calculating the median of the $j$-th parameter of all clients' local models. \cite{Bulyan} proposed Bulyan, which combines the idea of Krum and median. Bulyan first iteratively uses Krum to select a portion of client's local models. Then, Bulyan aggregates these clients' local models using a variant of the trimmed mean. These methods can defend against malicious clients to some extent, but are still not effective enough. To further enhance the performance, FLTrust \cite{FLTrust} assumes that the server has a clean and small dataset to bootstrap trust. FLAME \cite{FLAME} combines differential privacy (DP), model clustering, and weight clipping to eliminate backdoors.

\section{Preliminaries}
\label{pre}
In this section, we first explain some preliminaries and cryptographic primitives used in our approach, which should facilitate understanding the proposed protocol. Then we define the threat model and introduce the design goals of our EIFL.
\subsection{Federated Learning.} Suppose there are $N$ clients $C=\{C_n\}_{n=1}^N$ in FL, and each client $C_n$ has a private dataset ${D}_n$. At the beginning of each round $j$, the server sends a global model ${\bf g}^{j}$ to all clients. Each client $C_n$ uses the global model as an initialization and updates its local model ${\bf g}_n^j$  by locally minimizing the empirical loss over its dataset ${D}_n$. 
The updated local models of all clients are then uploaded to the server for global model aggregation. 
For instance, in FedAvg~\cite{DBLP:cite1}, the server takes a weighted average of these models and updates the global model for the next round $j+1$ as ${\bf g}^{j+1} \leftarrow \lambda\sum_{n=1}^N {|{D}_n|}/{d} \cdot {\bf g}_n^{j}$, 
where $\lambda$ is the learning rate and $d=\sum_{n=1}^N |D_n|$. The training process continues until reaching convergence or a predefined number of rounds.

\subsection{Secret Sharing}

In EIFL, we use the Shamir’s $t$-out-of-$N$ secret sharing protocol \cite{DBLP:cite16} to deal with dropout clients  and preserve clients' {input and output} 
privacy, where $N$ is the number of clients in our EIFL, and $t$ is the minimum number of clients required to recover the secret. The secret sharing protocol involves the following two steps: 

\begin{enumerate}
\item {\bf SS.Share}$\left(s,t,C\right) \rightarrow \left\{\left(n,s_n\right)\right\}_{n\in C}$: This algorithm takes as input a secret $s$, the threshold value $t$ and the set of clients $C$, where $|C|=N \geq t$. The output of this algorithm is a set of secret shares, denoted by $\left\{s_n\right\}$, each of which is assigned to a distinct holder (client) $n \in C$.
\item {\bf SS.Recon}$\left(t,\left\{\left(n,s_n\right)\right\}_{n\in P}\right) \rightarrow s$: This algorithm takes as input a subset $P$ of shares, where $n\in P\subseteq C$ and $t\leq|P|$, and outputs a secret $s$.
\end{enumerate}

\subsection{Symmetric Encryption}
We use a symmetric encryption scheme to protect the clients’ privacy and authenticate each client’s identity. For security, we require {\bf SE} to be \emph{indistinguishable} under a chosen plaintext attack (IND-CPA) and ciphertext integrity (IND-CTXT) as defined in \cite{DBLP:cite20}. The symmetric encryption scheme includes the following algorithms:
\begin{enumerate}
\item {\bf SE.Gen}$\left(k\right) \rightarrow SK$. The key generation algorithm takes as input a security parameter $k$ and outputs a secret symmetric key $K$.

\item {\bf SE.Enc}$\left(m,SK\right) \rightarrow c$. This algorithm inputs a plaintext $m$ to be encrypted and a symmetric key $SK$, and outputs the ciphertext $c$ of $m$ under the key $SK$.

\item {\bf SE.Dec}$\left(c,SK\right) \rightarrow m$. This algorithm inputs a ciphertext $c$ to be decrypted and its corresponding symmetric key $SK$, and outputs the plaintext $m$.
\end{enumerate}

\subsection{Key Agreement}
We utilize Diffie-Hellman key agreement protocol \cite{DBLP:cite18} to create the shared key for any two clients, which involves the following steps: 
\begin{enumerate}
\item  {\bf KA.Param}$\left(k\right) \rightarrow \left(pp\right)$. The parameter generation algorithm samples the public parameter $pp$ {(including a group $G$ with prime order $q$ and its generator $g$) that has a security parameter $k$.} 

\item  {\bf KA.Gen}$\left(pp\right) \rightarrow \left(SK,PK\right)$. The key generation algorithm samples a public key $PK$ and a secret key $SK$ pair from the public parameters $pp$.

\item  {\bf KA.Agree}$\left(SK_n,PK_m\right) \rightarrow s_{m,n}$. This algorithm takes as input the secret key $SK_n$ of client $n$ and the public key $PK_m$ of client $m$ and outputs a private agreed key $s_{m,n}$ between client $n$ and $m$. 
\end{enumerate}

\subsection{Signature Scheme}
We use the one-time ECDSA digital signature scheme based on SHA-256 to ensure the integrity of each client's masked gradient vector and authenticate their identities. Here we regard the masked gradient vector of each client as a message and sign it. The signature scheme includes the following algorithms:

\begin{enumerate}
\item {\bf SIG.Gen}$\left(k\right) \rightarrow \left(SK,PK\right)$. The key generation algorithm takes as input a security parameter $k$ and outputs a secret key $SK$ and a public key $PK$.

\item {\bf SIG.Sign}$\left(SK,m\right) \rightarrow \sigma$. This algorithm inputs the secret key $SK$ and message $m$ and outputs a signature $\sigma$.

\item {\bf SIG.Ver}$\left(PK,m,\sigma \right) \rightarrow \left\{0,1\right\}$: This algorithm takes as input a public key $PK$, a message $m$ and a signature $\sigma$, and returns a bit indicating whether the signature should be considered valid or not.
\end{enumerate}

\subsection{Pseudorandom Generator}
We employ a secure pseudorandom generator $\bf PRG$ that takes a seed and produces a random number as an output. The output must be computationally indistinguishable from a uniform element sampled from the output space as long as the seed is hidden from the distinguisher. In addition, for the same seed, the output of the $\bf PRG$ is identical.

\subsection{Threat Model}
\label{threat model}

We assume that the clients are semi-honest, i.e., they strictly perform operations in accordance with the agreed-upon FL protocol, but try to infer other clients' \emph{input privacy} \cite{DBLP:cite2,DBLP:cite3,DBLP:cite4}. Since the double-mask secure aggregation protocol \cite{DBLP:cite5} has addressed the issue of the input privacy, we focus on the security and privacy vulnerability with respect to output privacy and output integrity posed by untrusted server.

We suppose that the server may be captured by a malicious adversary. Specifically, the server may try to infer the output privacy using known information. In terms of output integrity, the server may deviate the protocol or return an incorrect result to disrupt model availability, which is the same with existing schemes \cite{DBLP:cite7,DBLP:cite8,vsersa}. Furthermore, we suppose that the server may forge aggregated results and proofs using the public auxiliary information used for verification. Note that existing schemes rely on the public auxiliary information used for verification to be kept confidential from the server, while our scheme lifts this restriction.

\textit{Remark 1.} Clients are not allowed to intentionally upload incorrect information to assist the malicious server in disrupting or forging aggregation results, as integrity verification under such circumstances has been demonstrated to be infeasible in previous studies \cite{mul,vsersa}. How to achieve Byzantine-robust aggregation against malicious clients in FL is another inherent challenge \cite{DBLP:cite36,FLAME,FLTrust}.

\subsection{Design Goals}
Our protocol is designed to efficiently protect the privacy and integrity of aggregation results. Specifically, we describe our objectives as follows:
\begin{itemize}
\item {\bf Output privacy.} The server cannot learn any useful information about the aggregation results.
\item {\bf Output integrity.} The server can not forge an output and pass verification.
\item {\bf Robust verification.} Even if some clients drop out mid-execution, our protocol is still able to finalize the verification process without interruption.
\item {\bf Efficient computation and communication.} Compared with the original secure aggregation protocol, our protocol incurs only lightweight additional overhead to guarantee both output privacy and output integrity.
\end{itemize}
\textit{Remark 2.} We explicitly clarify that the privacy protection of clients' local models is not claimed as a novel contribution of this work. Consistent with mainstream verifiable FL frameworks \cite{DBLP:cite7,DBLP:cite8,vsersa,SVFL}, our proposed EIFL is designed to operate as an upper-layer protocol on top of double-mask secure aggregation protocol \cite{DBLP:cite5}, and fully inherits the privacy guarantees of the underlying protocol.
\section{EIFL Design}
\label{Design}

\subsection{Overview}

\begin{figure*}
  \centering
\includegraphics[scale=0.45]{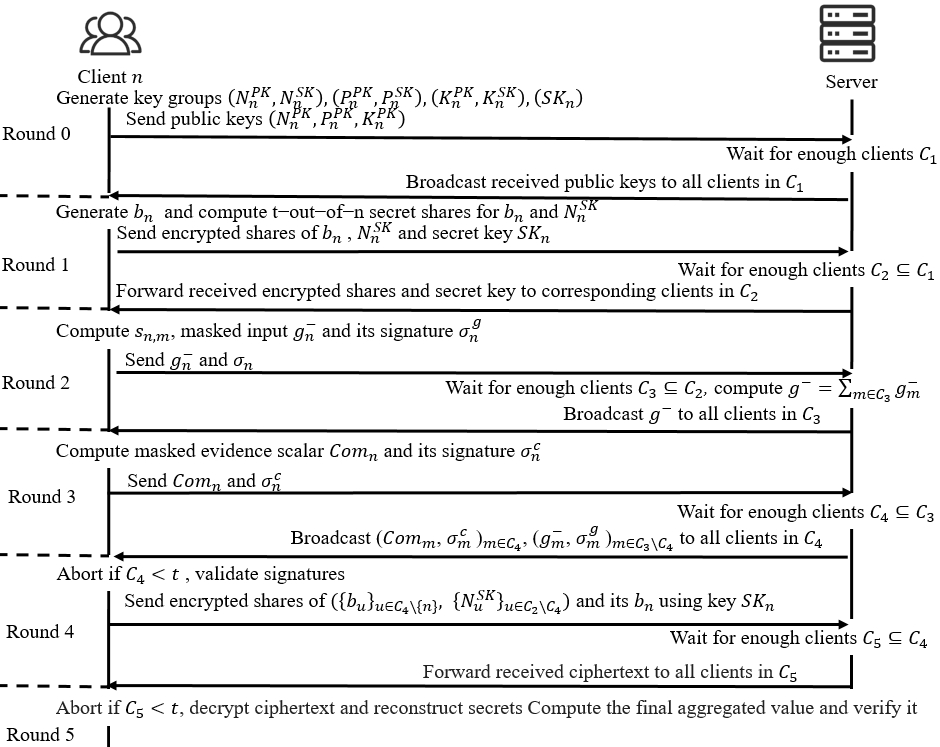}
  \caption{High-level overview of EIFL. We use a single client for simplicity.}
  \label{fig:fig12}
  \vspace{-3mm}
\end{figure*}

EIFL is designed to run on top of double-mask protocol \cite{DBLP:cite5}, aiming to protect output privacy and address the verification vulnerability of output integrity. EIFL goes through 6 rounds, among which Round 2 and Round 5 correspond to the two stages of aggregation respectively. Round 0 and Round 1 are mainly used for information exchange among clients. Clients in Round 2 mask their local models in accordance with the double-mask protocol and upload them to the server. Then the server directly aggregates the masked models into an intermediate result, which is the first aggregation stage. Clients in Round 3 use the proposed random vector generation method and the intermediate result to agree on a random vector. Then they compute the inner product between the random vector and their local model vectors and exchange the values via the server. Round 4 is used for clients to exchange encrypted information in preparation for the aggregation in the second stage. In Round 5, clients decrypt the encrypted messages received in Round 4 and aggregate the global model locally, which constitutes the second aggregation stage. In addition, they also need to use the vector inner product of the global model and the agreed random vector to verify the output integrity.
Figure \ref{fig:fig12} shows the high-level overview of EIFL and Table \ref{tbl:notations} shows the important notations used in this paper.

\begin{table}
\center
\caption{Important notations used in the paper.}
\addtolength{\tabcolsep}{-3pt}
\label{tbl:notations}
\begin{tabular}{c|c}
\hline
{\bf Notation} & {\bf Meaning}  \\  \hline
$k$& the security parameter \\ \hline
$t$& the threshold of secret sharing \\ \hline
$N$& the total number of clients $C_i$ \\ \hline
$M$& the model dimension \\\hline
$C_i$& the set of online clients in round $i$  \\ \hline
$SK_n$& the symmetric encryption secret key of client $n$  \\ \hline
$(N_n^{PK},N_n^{SK})$&the public/private keys for agreed masks \\ \hline
$\left(P_n^{PK},P_n^{SK}\right)$& the public/private keys for encrypting shares \\ \hline
$\left(K_n^{PK},K_n^{SK}\right)$& the signature public/private keys \\ \hline
${\bf g}_n$& the local model of client $n$ \\ \hline
${\bf\tilde g}_n$& the masked local model of client $n$ \\ \hline
${\bf g}$& the sum of local models for clients in $C_4$ \\ \hline
$\sigma_n^g$& the signature of ${\bf\tilde g}_n$\\ \hline
${\bf\tilde g}$& the intermediate result generated by the server \\ \hline
${\bf V}$& the random vector generated from ${\bf\tilde g}$\\ \hline
$b_n$& the random number generated by client $n$  \\ \hline
$s_{n,m}$& the agreed mask of client $n$ and client $m$ \\ \hline
$Com_n$ &the masked inner product value of client $n$  \\ \hline
$\sigma_n^c$ & the signature of $Com_n$  \\ \hline 
$Com$ &the sum of inner product values for clients in $C_4$  \\ \hline
\end{tabular}
\end{table}

\subsection{Workflow}
Our EIFL workflow consists of the following rounds. In the sequel, we use boldface letters to denote vectors, and $\cdot$ between two vectors denotes the inner product operation. Figure \ref{fig:fig3} presents the full details of EIFL.
\begin{itemize}
\item {\bf Round 0:} Each client $n\in C$ generates key $\left(SK_n\right)$,  $\left(N_n^{PK},N_n^{SK}\right)$, $\left(P_n^{PK},P_n^{SK}\right)$, $\left(K_n^{PK},K_n^{SK}\right)$  needed in EIFL and sends all public keys $\left(N_n^{PK},P_n^{PK},K_n^{PK}\right)$ to the server. After receiving messages from at least $t$ clients (denoted as $C_1 \subseteq C$), the server broadcasts received public keys to all clients in $C_1$.

\item {\bf Round 1:} Each client $n\in C_1$ generates a random number $b_n$ and computes the $t$-out-of-$N$ secret shares for $b_n$ and $N_n^{SK}$ as:
\begin{equation}
    \begin{aligned}
        &\left\{\left(m,N_{n,m}^{SK}\right)\right\}_{m\in C_1 }\leftarrow {\bf SS.Share}(N_n^{SK},t,C_1)\\
        &\left\{\left(m,b_{n,m}\right)\right\}_{m\in C_1} \leftarrow {\bf SS.Share}\left(b_n,t,C_1\right)
    \end{aligned}
\end{equation}
For each other client $m\in C_1\backslash\left\{n\right\}$, client $n$ first computes the agreed key $k_{n,m}\leftarrow {\bf KA.Agree}\left(P_n^{SK},P_m^{PK}\right)$ via the Diffie-Hellman key agreement protocol, then encrypts the shares and its secret key $SK_n$ using the agreed key $k_{n,m}$ as:
\begin{equation}
    \begin{aligned}
        E_{n,m}\leftarrow {\bf SE.Enc}\left(N_{n,m}^{SK} || b_{n,m} ||SK_n,k_{n,m}\right)
    \end{aligned}
    \nonumber
\end{equation}
The encryption step ensures that the server cannot obtain the plaintext secret shares and $SK_n$, which is a critical prerequisite for preserving the privacy of clients' local models and the aggregated result.
Finally, client $n$ needs to submit the ciphertext set $\left\{\left(n,m,E_{n,m}\right)\right\}_{m\in C_1\backslash\left\{n\right\}}$ to the server. After receiving messages from at least $t$ clients, the server forms the set of surviving clients $C_2\subseteq C_1$, and forwards received encrypted shares and secret keys to all clients in $C_2$.

\item {\bf Round 2:} Each client $n \in C_2$ decrypts ciphertexts and computes the agreed masks $s_{n,m}\leftarrow {\bf KA.Agree} \left(N_n^{SK},N_m^{PK}\right)$ with all the other clients $m$, where $s_{n,m}=s_{m,n}$. Then, following the double-mask protocol, the client $n$ masks its local model ${\bf g}_n$ to ${\bf\tilde g}_n$ using the agreed masks and its $b_n$ as

\begin{equation}
    \begin{aligned}
        {\bf\tilde g}_n = {\bf g}_n + {\rm\bf PRG}\left(b_n\right) + \sum_{m\in C_2:n<m}{\rm\bf PRG}\left(s_{n,m}\right)\\
     - \sum_{m\in C_2:n>m}{\rm\bf PRG}\left(s_{m,n}\right)
    \end{aligned}
    \label{local-mask}
\end{equation} 
As the server must retransmit the local model vectors of clients that drop out of the protocol in {\bf Round 3}, each client is required to generate a digital signature for its masked model vector before submitting it. This mechanism prevents the server from sending forged local model vectors to clients. Therefore, client $n$ signs ${\bf\tilde g}_n$ as $\sigma_n^g\leftarrow {\bf SIG.Sign}\left(K_n^{SK},{\bf\tilde g}_n\right)$ and submits $\left({\bf\tilde g}_n,\sigma_n^g\right)$ to the server. After receiving messages from at least $t$ clients (denoted as $C_3\subseteq C_2$), the server sums all the masked local models as ${\bf\tilde g} = \sum_{m\in C_3}{\bf\tilde g}_m$ and broadcasts ${\bf\tilde g}$ to all clients in $C_3$.

\textit{Remark 3.} Note that the intermediate result of $C_3$ is a masked global model, which means that each client still needs to locally remove the mask to obtain the global model.

\item {\bf Round 3:} Each client $n \in C_3$ computes random vector ${\rm\bf V} = {\rm\bf PRG}\left(Hash\left({\bf\tilde g}\right)\right)$. Since the vector ${\bf V}$ is randomly generated from the intermediate result ${\bf\tilde g}$ and used to compute the inner product with the local input vector ${\bf g}_n$ for verifying the output integrity, it is inherently bound to the output integrity. Benefiting from the collision resistance of SHA-256 hash function and the randomness of the ${\bf PRG}$, our verification method can detect the server's malicious behavior with an extremely high probability when the server returns an incorrect intermediate result. To prevent information leakage of the private local model ${\bf g}_n$ from the plaintext inner product ${\bf V}\cdot{\bf g}_n$, client $n$ masks the inner product value using the same double-mask mechanism as the model aggregation process:

\begin{equation}
\begin{aligned}
Com_n = {\rm\bf V}\cdot{\rm\bf g_n} +{PRG}\left(b_n\right) + \sum_{m\in C_3:n<m}{PRG}\left(s_{n,m}\right)\\ - \sum_{m\in C_3:n>m}{PRG}\left(s_{n,m}\right)
\end{aligned}
\label{com-mask}
\end{equation}
In addition, to prevent the server from sending forged inner product values to clients, client $n$ must generate the signature $\sigma_n^c$ of $Com_n$ as $\sigma_n^c\leftarrow {\bf SIG.Sign}\left(K_n^{SK},Com_n\right)$ and sends $\left(Com_n,\sigma_n^c\right)$ to the server. After receiving messages from at least $t$ clients (denoted as $C_4\subseteq C_3$), the server broadcasts messages $\left\{\left(m,Com_m,\sigma_m^c\right)\right\}_{m\in C_4}$ and masked input vectors of dropout clients in Round 3 (i.e., $\left\{\left(m,{\bf\tilde g}_m,\sigma_m^g\right)\right\}_{m\in C_3\backslash C_4}$) to all clients in $C_4$. Since only clients in $C_4$ upload inner product values while the intermediate result includes the masked input vectors of dropout clients, the server must resend the masked input vectors of dropout clients for other clients to remove them from the intermediate result. Specifically, the server broadcasts $\left(\left\{\left(m,Com_m,\sigma_m^c\right)\right\}_{m\in C_4},\left\{\left(m,{\bf\tilde g}_m,\sigma_m^g\right)\right\}_{m\in C_3\backslash C_4}\right)$ to all clients in $C_4$.

\item {\bf Round 4:} Each client $n \in C_4$ decides to accept or reject the received masked local models by verifying the corresponding signatures. If all signatures are correct, client $n$ first computes the agreed key $k_{n,m}\leftarrow {\bf KA.Agree}\left(P_n^{SK},P_m^{PK}\right)$ for each other client $m\in C_2\backslash \left\{n\right\}$. Using the agreed key $k_{n,m}$, client $n$ can decrypt the ciphertext as:
$$N_{m,n}^{SK} ||b_{m,n} ||SK_m\leftarrow {\bf SE.Dec}\left(E_{m,n}, k_{n,m}\right)$$ 
Client $n$ packages its $b_n$, the shares $\left\{b_{m,n}\right\}_{m\in C_4\backslash\left\{n\right\}}$ and $\left\{N_{m,n}^{SK}\right\}_{m\in C_2\backslash C_4}$ as: 
\begin{equation}
\begin{aligned}
H_n=\left(b_n, \left\{b_{m,n}\right\}_{m \in C_4 \backslash\left\{n\right\}},\left\{N_{m,n}^{SK}\right\}_{m \in C_2 \backslash C_4}\right)
\end{aligned}
\end{equation}
To protect the output privacy, client $n$ encrypts $H_n$ using its secret key $SK_n$ as $R_{n}\leftarrow {\bf SE.Enc}(SK_n,H_n)$ and sends $R_n$ to the server. After receiving the ciphertexts from at least $t$ clients (denoted as $C_5\subseteq C_4$), the server forwards received ciphertexts set $\left\{R_v\right\}_{v \in C_5}$ to all clients in $C_5$. The role of clients $C_5$ is to recover the mask added to the models of all clients in $C_4$.

\item {\bf Round 5:} Each client $n \in C_5$ receives the ciphertexts set $\left\{R_v\right\}_{v \in C_5}$ from the server. For each $R_v$, client $n$ decrypts it as $H_{v}\leftarrow {\bf SE.Dec}(SK_v,R_v)$, where $H_v=\left(b_v, \left\{b_{u,v}\right\}_{u \in C_4 \backslash\left\{v\right\}},\left\{N_{u,v}^{SK}\right\}_{u \in C_2 \backslash C_4}\right)$. Using the secret sharing protocol, client $n$ reconstructs the secrets as:
\begin{equation}
\begin{aligned}
\left\{
        \begin{array}{ll}
            b_u\leftarrow {\bf SS.Recon}\left(t,\left\{b_{u,v}\right\}_{v\in C_5}\right),&u \in C_4 \backslash C_5\\
            N_u^{SK}\leftarrow {\bf SS.Recon}\left(t,\left\{N_{u,v}^{SK}\right\}_{v\in C_5}\right),&u \in C_2 \backslash C_4
        \end{array}
\right.
\end{aligned}
\end{equation}
For the clients $u \in C_2 \backslash C_4$ and $v \in C_4$, client $n$ computes their agreed mask $s_{u,v}\leftarrow {\bf KA.Agree} \left(N_u^{SK},N_v^{PK}\right)$ via the Diffie-Hellman key agreement protocol. Next, each client computes the final aggregated output vector ${\bf g}$ and inner product value $Com$ as
\begin{flalign}
\label{remove-mask}
\begin{split}
   {\bf g} = {\bf\tilde g} &- \sum_{u\in C_3\backslash C_4}{\bf\tilde g}_u -\sum_{u\in C_4}{\rm\bf PRG}\left(b_u\right) \\
   &- \sum_{u\in C_4,v\in C_2\backslash C_4:u<v}{\rm\bf PRG}\left(s_{u,v}\right)\\
        &+ \sum_{u\in C_4,v\in C_2\backslash C_4:u>v}{\rm\bf PRG}\left(s_{v,u}\right) = \sum_{u\in C_4}{\bf g_u}.
\end{split}&
\end{flalign}

\begin{flalign}
\label{com-remove-mask}
\begin{split}
   Com &= \sum_{u\in C_4}Com_u - \sum_{u\in C_4}{PRG}\left(b_u\right)\\
   &- \sum_{u\in C_4,v\in C_3\backslash C_4:u<v}{PRG}\left(s_{u,v}\right)\\
    &+ \sum_{u\in C_4,v\in C_3\backslash C_4:u>v}{PRG}\left(s_{v,u}\right)
    = \sum_{u\in C_4}{\bf V}\cdot {\bf g_u}.
\end{split}&
\end{flalign}
Finally, each client $n \in C_5$ performs the output integrity verification by checking whether the equality $Com = {\bf V}\cdot {\bf g}$ holds. If the equality holds, the client accepts ${\bf g}$ as the valid aggregated result; otherwise, it detects malicious server behavior.

\end{itemize}

\begin{figure*} 
  \centering  
  
  \begin{tcolorbox}[
    width=\linewidth,  
    colframe=black,    
    colback=white,     
    boxsep=4pt,        
    left=5pt,          
    right=5pt,          
    fontupper=\small
  ]
  \setlist*[itemize]{leftmargin=*, noitemsep, topsep=0pt, parsep=0pt}

  \textbf{Setup Phase}: All parties are given the security parameter $k$, the number of clients $N$ and a threshold value $t$, honestly generated $pp\leftarrow {\rm\bf KA.Param}(k)$.

  \textbf{Round 0} (Initialization) \textbf{Client $n$:}
  
  \begin{itemize}
  \item Generate key pairs $\left(N_n^{PK},N_n^{SK}\right)\leftarrow {\rm\bf KA.Gen}(pp), \left(P_n^{PK},P_n^{SK}\right)\leftarrow {\rm\bf KA.Gen}(pp), \left(K_n^{PK},K_n^{SK}\right)\leftarrow {\rm\bf SIG.Gen}(k),  \left(SK_n\right)\leftarrow {\bf SE.Gen}(k)$.
  \item Send $\left(N_n^{PK},P_n^{PK},K_n^{PK}\right)$ to the server (through the private authenticated channel) 
and move to next round.
  \end{itemize}
  
  \textbf{Server:}
  \begin{itemize}
  \item Collect messages from at least $t$ clients in the previous round (denote with $C_1$ this set of clients). Otherwise, abort.
  \item Broadcast to all clients in $C_1$ the list $\left\{\left(m,N_m^{PK},P_m^{PK},K_m^{PK} \right)\right\}_{m\in C_1}$ and move to next round.
  \end{itemize}
  
  \textbf{Round 1} (Key Sharing) \textbf{Client $n$:}
  \begin{itemize}
  \item Receive the list $\left\{\left(m,N_m^{PK},P_m^{PK},K_m^{PK}\right)\right\}_{m\in C_1}$ broadcasted by the server. Check whether $|C_1|\geq t$ and all of the key pairs $\left(N_m^{PK},P_m^{PK},K_m^{PK}\right)$ are distinctive. If not, abort.
  \item Select a random number $b_n$. Generate the shares of $b_n$ as $\left\{\left(m,b_{n,m}\right)\right\}_{m\in C_1} \leftarrow {\bf SS.Share}\left(b_n,t,C_1\right)$, where $b_{n,m}$ is the share of client $n$ to client $m$.
  \item Generate the shares of $N_n^{SK}$ as $\left\{\left(m,N_{n,m}^{SK}\right)\right\}_{m\in C_1 }\leftarrow {\bf SS.Share}(N_n^{SK},t,C_1)$, where $N_{n,m}^{SK}$ is the share of client $n$ to $m$.
  \item For each other client $m\in C_1\backslash\left\{n\right\}$, compute $E_{n,m}\leftarrow {\bf SE.Enc}\left(N_{n,m}^{SK} || b_{n,m} ||SK_n,{\bf KA.Agree}\left(P_n^{SK},P_m^{PK}\right)\right)_{m\in C_1\backslash\left\{n\right\}}$.
  \item Send $\left\{\left(n,m,E_{n,m}\right)\right\}_{m\in C_1\backslash\left\{n\right\}}$ to the server.
  \end{itemize}

  \textbf{Server:}
  \begin{itemize}
  \item Collect messages from at least $t$ clients (denote with $C_2\in C_1$ this set of clients). Otherwise, abort.
  \item Send to each client $n\in C_2$ the set $\left\{\left(m,n,E_{m,n}\right)\right\}_{m\in C_2\backslash\left\{n\right\}}$.
  \end{itemize}

  \textbf{Round 2} (Masking Input Vector) \textbf{Client $n$:}
  \begin{itemize}
  \item Receive from the server the list of $\left\{\left(m,n,E_{m,n}\right)\right\}_{m\in C_2\backslash\left\{n\right\}}$. Check whether $C_2\in C_1$ and $|C_2|\geq t$. If not, abort.
  \item For each other client $m\in C_2\backslash\left\{n\right\}$, compute $S_{n,m}\leftarrow {\bf KA.Agree}\left(N_n^{SK},N_m^{PK}\right)$.
  \item Compute the masked input vector
${\bf\tilde g}_n = {\bf g}_n + {\rm\bf PRG}\left(b_n\right) + \sum_{m\in C_2:n<m}{\rm\bf PRG}\left(s_{n,m}\right) - \sum_{m\in C_2:n>m}{\rm\bf PRG}\left(s_{n,m}\right)$.
  \item Generate the signature $\sigma_n^g\leftarrow {\bf SIG.Sign}\left(K_n^{SK},{\bf\tilde g}_n\right)$ and send $\left({\bf\tilde g}_n,\sigma_n^g\right)$ to the server.
  \end{itemize}

  \textbf{Server:}
  \begin{itemize}
    \item Collect messages from at least $t$ clients (denote with $C_3\in C_2$ this set of clients). Otherwise, abort.
    \item Compute ${\bf\tilde g} = \sum_{m\in C_3}{\bf\tilde g}_m$. Send ${\bf\tilde g}$ and the list $C_3$ to each client $n\in C_3$.
  \end{itemize}

  \textbf{Round 3} (Verification Generation) \textbf{Client $n$:}
  \begin{itemize}
    \item Receive ${\bf\tilde g}$ from the server. Check whether $C_3\in C_2,|C_3|\geq t$. If not, abort.
    \item Compute ${\rm\bf V} = {\rm\bf PRG}\left(Hash\left({\bf\tilde g}\right)\right)$ and $Com_n = {\rm\bf V}\cdot{\rm\bf g_n} +{PRG}(b_n) +\sum_{m\in C_3:n<m}{PRG}\left(s_{n,m}\right) - \sum_{m\in C_3:n>m}{PRG}\left(s_{n,m}\right)$.
    \item Generate the signature $\sigma_n^c\leftarrow {\bf SIG.Sign}\left(K_n^{SK},Com_n\right)$ and send $\left(Com_n,\sigma_n^c\right)$ to the server.
  \end{itemize}

  \textbf{Server:}
  \begin{itemize}
    \item Collect messages from at least $t$ clients (denote with $C_4\in C_3$ this set of clients). Otherwise, abort.
    \item Broadcast $\left(\left\{\left(m,Com_m,\sigma_m^c\right)\right\}_{m\in C_4},\left\{\left(m,{\bf\tilde g}_m,\sigma_m^g\right)\right\}_{m\in C_3\backslash C_4}\right)$ to all clients in $C_4$.
  \end{itemize}
    
  \textbf{Round 4} (Secret Shares Encryption) \textbf{Client $n$:}
  \begin{itemize}
    \item Receive messages broadcasted by the server. Check whether $C_4\in C_3,|C_4|\geq t$ and each signature $\sigma_m$ is correct. If not, abort.
    \item For each other client $m\in C_2\backslash \left\{n\right\}$, decrypt the ciphertext as $N_{m,n}^{SK} ||b_{m,n} ||SK_m\leftarrow {\bf SE.Dec}\left(E_{m,n}, {\bf KA.Agree}\left(P_n^{SK},P_m^{PK}\right)\right)$.
   \item Collect $H_n=\left(b_n, \left\{b_{m,n}\right\}_{m \in C_4 \backslash\left\{n\right\}},\left\{N_{m,n}^{SK}\right\}_{m \in C_2 \backslash C_4}\right)$.
   \item Encrypt $H_n$ as $R_{n}\leftarrow {\bf SE.Enc}(SK_n,H_n)$ and send $R_n$ to the server.
  \end{itemize}

  \textbf{Server:}
  \begin{itemize}
    \item Collect messages from at least $t$ clients (denote with $C_5\in C_4$ this set of clients). Otherwise, abort.
    \item Broadcast $\left\{R_u\right\}_{u \in C_5}$ to all clients in $C_5$.
  \end{itemize}

  \textbf{Round 5} (Local Aggregation) \textbf{Client $n$:}
  \begin{itemize}
    \item Receive $\left\{R_u\right\}_{u \in C_5}$ broadcasted by the server. Check whether $C_5\in C_4$ and $|C_5|\geq t$. If not, abort.
    \item For each other client $u\in C_5$, decrypt each $R_u$ as $H_u\leftarrow {\bf SE.Dec}\left(SK_u,R_u\right)$.
    \item For each other client $u \in C_4 \backslash C_5$, compute $b_u\leftarrow {\bf SS.Recon}\left(t,\left\{b_{u,v}\right\}_{v\in C_5}\right)$.
    \item For each other client $u \in C_2 \backslash C_4$, compute $N_u^{SK}\leftarrow {\bf SS.Recon}\left(t,\left\{N_{u,v}^{SK}\right\}_{v\in C_5}\right)$.
    \item Compute ${\bf PRG}\left(s_{u,v}\right)\leftarrow {\bf PRG}\left({\bf KA.Agree}\left({N_u^{SK},N_v^{PK}}_{u\in C_2\backslash C_4,v\in C_5}\right)\right)$ and ${\bf PRG}\left(b_u\right)_{u\in C_4}$.
    \item Calculate the aggregation result as
    \begin{equation}
    \begin{aligned}
        &{\bf g} = \sum_{u\in C_4}{\bf g}_u = {\bf\tilde g} - \sum_{u\in C_3\backslash C_4}{\bf\tilde g}_u - \sum_{u\in C_4}{\rm\bf PRG}\left(b_u\right) - \sum_{u\in C_4,v\in C_2\backslash C_4:u<v}{\rm\bf PRG}\left(s_{u,v}\right) + \sum_{u\in C_4,v\in C_2\backslash C_4:u>v}{\rm\bf PRG}\left(s_{v,u}\right)\\
        &Com = \sum_{u\in C_4}Com_u - \sum_{u\in C_4}{PRG}\left(b_u\right) - \sum_{u\in C_4,v\in C_3\backslash C_4:u<v}{PRG}\left(s_{u,v}\right) + \sum_{u\in C_4,v\in C_3\backslash C_4:u>v}{PRG}\left(s_{v,u}\right)
    \end{aligned}
    \end{equation}
    \item Check whether $Com == {\bf V}\cdot {\bf g}$.
  \end{itemize}
  
  \end{tcolorbox}

  \caption{Detailed EIFL.}
  \label{fig:fig3}
\end{figure*}

\subsection{Complexity Analysis}
We assume a single server and $N$ total clients, and all clients have model vectors with a length of $M$. We use $N_d$ and $N_v$ to denote the number of dropped clients in the non-verification phase and verification phase, respectively. To facilitate analysis, we assume that the dropout rate of clients is fixed. The result is shown in Table \ref{Complexity Analysis}.

\subsubsection{Computation Complexity of Client} $O(MN^2)$. The computation costs of each client can be broken up into 9 parts: (1) Perform a total of $2N$ key agreements, which is $O\left(N\right)$ time, (2) Create $t$-out-of-$N$ Shamir's secret shares of $b_n$ and $N_n^{SK}$, which is $O\left(N^2\right)$ time, (3) Generate values ${\rm\bf PRG}(b_n)$ and ${\rm\bf PRG}(s_{n,m})$ for every other client $m$ for each entry in the input gradient vector, which takes $O\left(MN\right)$ time in total, (4) Generate the vector ${\bf U}$ by using ${Hash}$ function and ${\bf PRG}$ and compute the masked inner product value $Com_n$ in {\bf Round 3}, which takes $O\left(M+N\right)$ time in total, (5) Perform $N-1$ and $1$ symmetric encryption in {\bf Round 1} and {\bf Round 4} respectively, which takes $O\left(N\right)$ time in total, (6) Perform $N-1$ and $\left(N-N_d-N_V-1\right)$ symmetric decryption in {\bf Round 4} and {\bf Round 5} respectively, which takes $O(N)$ time in total, (7) Perform one signature in {\bf Round 3} and $N_d$ verification in {\bf Round 4}, respectively, which takes $O\left(M+N\right)$ time in total, (8) Reconstruct $N_d$ $t$-out-of-$N$ Shamir's secrets (one for each dropped client) in {\bf Round 5}, which takes $O(N^2)$ time, (9) Generate and remove the appropriate ${\rm\bf PRG}(b_n)$ and ${\rm\bf PRG}(s_{n,m})$ values from the intermediate result vector ${\bf\tilde g}$ in {\bf Round 5}, which takes time $O\left(MN^2\right)$ in the worst case. Overall, each client’s computation complexity is $O(MN^2)$.

\subsubsection{Communication Complexity of Client} $O(N^2+MN)$. The communication cost of each client can be split into 5 parts: (1) Send $3$ and receive $3(N-1)$ public keys in {\bf Round 0}, which takes $O(N)$ communication, (2) Send $3(N-1)$ and receive $3(N-1)$ ciphertexts in {\bf Round 1}, which takes $O(N)$ communication, (3) Send $1$ masked input-signature pair and receive $1$ intermediate result vector in {\bf Round 2}, which is $O(M)$, (4) Send $1$ inner product value-signature pair, receive $N-N_d-N_v-1$ inner product value-signature pairs and $N_v$ masked input-signature pairs in {\bf Round 3}, which is $O(MN)$ and (5) Send $N$ and receive $N(N- N_d - N_v - 1)$ ciphertexts in {\bf Round 4}, which is $O(N^2)$. Hence, the total communication cost is $O(N^2+MN)$.

\subsubsection{Computation Complexity of Server} $O(MN)$. The server generate the intermediate result vector using $N-N_d$ clients' masked input vectors in {\bf Round 2}, which takes $O\left(MN\right)$ time.

\subsubsection{Communication Complexity of Server} $O(N^3+MN^2)$. The communication complexity of the server can be broken up into 5 parts: (1) Receive $3N$ and send $3N\left(N-1\right)$ public keys in {\bf Round 0}, (2) Receive $3N(N-1)$ and send $3N\left(N-1\right)$ ciphertexts in {\bf Round 1}, (3) Receive $N-N_d$ masked input-signature pairs and send $N-N_d$ intermediate result vectors in {\bf Round 2}, (4) Receive $N-N_d-N_v$ scalar value-signature pairs, send $\left(N-N_d-N_v\right)\left(N - N_d - N_v - 1\right)$ scalar value-signature pairs and $N_v\left(N - N_d - N_v\right)$ masked input-signature pairs in {\bf Round 3}, (5) Receive $N\left(N-N_d-N_v\right)$ and send $N\left(N-N_d-N_v\right)(N- N_d - N_v - 1)$ ciphertexts in {\bf Round 4}. 
Hence, the communication complexity of the server is $O(N^3+MN^2)$ in total.

\begin{table}
    \centering
    \small
    \caption{Complexity analysis of our protocol.}
  \addtolength{\tabcolsep}{2pt}
  \begin{tabular}{lccc}
    \toprule
    & Client   & Server   \\
    \midrule
    Computation                     & $O(MN^2)$    &  $O(MN)$ \\
    Communication                   & $O(N^2+MN)$     & $O(N^3+MN^2)$  \\
    \bottomrule
  \end{tabular}
    \label{Complexity Analysis}
    \vspace{-2mm}
\end{table}

\section{Security Analysis of EIFL}
\label{Analysis}

In this section, we provide the formal definition of our EIFL and analyze how our EIFL guarantees the output privacy, and output integrity under the corresponding threat model.

We use $S$ to represent the server, and $C_k \subseteq C$ to denote the clients who correctly upload data to the server at round $k-1$. Note that we have $C_5\subseteq C_4\subseteq C_3\subseteq C_2\subseteq C_1\subseteq C$. $C_k\backslash C_{k+1}$ denotes the clients who have sent data to the server in Round $k-1$, but abort before uploading data in Round $k$. We denote the input vector of each client $n$ with ${\bf g}_n$, and ${\bf g}_{C}=\left\{{\bf g}_n\right\}_{n\in C}$ indicates inputs of the clients $C$.  

\subsection{Output Privacy}
In EIFL, the view of the server $S$ is defined as its internal state (including the intermediate result and randomness) and all messages it received from clients. The view of the server can be seen as a random variable $REAL_S^{C,t,k}\left({\bf g}_C\right)$, where $t$ and $k$ indicate the threshold and security parameter used in our protocol, respectively.
We say that the server learns nothing if its view can be simulated by a simulator without any secret internal state of the clients (e.g., the input of a client). 
Below, we provide the formal definition of EIFL.

\begin{theorem}[Output privacy]
\label{output privacy}
{For all $t,  k, C$, and a corrupted server $S$, there exists a Probabilistic Polynomial Time (PPT) simulator $SIM$ whose output is indistinguishable from the output of $REAL_U^{S,t,k}$, i.e., 
\begin{small}
\begin{align*}
REAL_S^{C,t,k} \left({\bf g}_C\right) \approx  SIM_S^{C,t,k}\left({\bf g}_C\right) \nonumber
\end{align*}
\end{small}%
}
\end{theorem}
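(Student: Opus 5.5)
We will prove Theorem~\ref{output privacy} by a standard hybrid argument in the style of the double-mask protocol \cite{DBLP:cite5}. We construct $SIM$ and a chain of hybrids $H_0,\dots,H_\ell$, where $H_0 = REAL_S^{C,t,k}({\bf g}_C)$ and $H_\ell$ uses no honest client's input ${\bf g}_n$ and no secret state. The simulator runs the corrupted server $S$ as a black box and plays all honest clients. It fixes the sets $C_1,\dots,C_5$ from $S$'s scheduling choices and aborts exactly when a real client would abort; the checks on $|C_i|\ge t$ and on signatures depend only on public data.

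The hybrids proceed in the following order.
\begin{enumerate}
\item Replace every Diffie--Hellman key $k_{n,m}={\bf KA.Agree}(P_n^{SK},P_m^{PK})$ between honest clients by a uniformly random key. This step is justified by the DDH/key-agreement security underlying \cite{DBLP:cite18}.
\item Replace the Round-1 ciphertexts $E_{n,m}$ by encryptions of zeros of the same length, using IND-CPA of ${\bf SE}$. Since $S$ only relays ciphertexts, IND-CTXT additionally guarantees that any ciphertext $S$ forges is rejected. After this step the server sees nothing about the shares $N^{SK}_{n,m}$, $b_{n,m}$, or the keys $SK_n$.
\item Replace each Round-4 ciphertext $R_n={\bf SE.Enc}(SK_n,H_n)$ by an encryption of zeros. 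This also follows from IND-CPA, because $SK_n$ now appears only inside ciphertexts already replaced by zeros.
\end{enumerate}
After these three steps, the server's view contains no information about the masking seeds $b_n$ or the pairwise seeds $s_{n,m}$. This is the essential difference from \cite{DBLP:cite5}: there the server obtains the shares in the clear in the unmasking round, whereas here the reconstruction material is encrypted under $SK_n$, which is never revealed to $S$.

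Next we replace ${\bf PRG}(b_n)$ by a uniformly random vector for every $n\in C_3$, and similarly the scalar $PRG(b_n)$ used in $Com_n$. This is justified by PRG security, since $b_n$ is now hidden. As a result, each masked model ${\bf\tilde g}_n$ is uniform over the model space and independent of ${\bf g}_n$, and each $Com_n$ is uniform and independent of ${\bf V}\cdot{\bf g}_n$. Unlike \cite{DBLP:cite5}, we do not need to preserve the sum $\sum_n {\bf g}_n$ in the simulation, because the server never learns the aggregate. Consequently the pairwise masks do not need a separate programming step: the self-mask alone already randomizes each message. We then define $SIM$ to send uniformly random vectors as ${\bf\tilde g}_n$ and uniformly random scalars as $Com_n$. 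The simulator computes ${\bf V}={\bf PRG}(Hash({\bf\tilde g}))$ from whatever ${\bf\tilde g}$ the server returns, but ${\bf V}$ only enters $Com_n$, which is already uniform. Signatures are produced honestly with freshly generated signature keys, and public keys are generated honestly.

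The main obstacle is a circularity in how the secrets are protected. The self-mask seed $b_n$ is hidden by the encryptions $E_{m,n}$ and $R_n$. The key $SK_n$ protecting $R_n$ is itself shipped inside the ciphertexts $E_{n,m}$ to the other clients. The order of the hybrids therefore matters. We must first make all $k_{n,m}$ random and remove $SK_n$ from the $E_{n,m}$, and only afterwards invoke IND-CPA under $SK_n$. This order is valid only because all clients are semi-honest and never forward $SK_n$ to $S$ (Remark~1).

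A second subtle point is that a malicious $S$ can choose different sets $C_i$ for different clients or replay ciphertexts. We handle this with IND-CTXT and signature unforgeability: any deviation by $S$ either causes an abort, which is simulatable, or leads to decryption of a value that $S$ cannot learn anyway. Summing the negligible advantages over polynomially many hybrids gives $REAL_S^{C,t,k}({\bf g}_C)\approx SIM_S^{C,t,k}({\bf g}_C)$.
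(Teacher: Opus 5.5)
Your proof is correct under the same implicit assumptions as the paper. Its first three hybrids are exactly the paper's $\mathrm{Hyb}_1$--$\mathrm{Hyb}_3$, in the same order: DDH for the keys $k_{n,m}$, IND-CPA for the Round-1 ciphertexts $E_{n,m}$, and IND-CPA under $SK_n$ for the Round-4 ciphertexts $R_n$.

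The real difference is what comes after. The paper stops at $\mathrm{Hyb}_3$ and declares the simulation complete. Yet the server's view still contains $\tilde{\mathbf g}_n$ and $Com_n$, both of which depend on $\mathbf g_n$. The paper is tacitly leaning on the inherited double-mask guarantees (Remark~2), or on the fact that its simulator is formally handed $\mathbf g_C$. Your extra PRG hybrid closes this hole. Once $E_{n,m}$ and $R_n$ are blanked, $b_n$ enters the view only through its PRG outputs, so the self-mask one-time-pads every $\tilde{\mathbf g}_n$ and $Com_n$. You therefore get a simulator that uses no input at all, and you need not program the pairwise masks to a target sum as in \cite{DBLP:cite5}. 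You also explain why the hybrid order is forced: $SK_n$ travels inside $E_{n,m}$. And you handle honest clients' decryption of server-supplied ciphertexts via IND-CTXT. The paper's IND-CPA-only argument leaves both points unaddressed.

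Two hypotheses should be stated rather than assumed.

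\emph{Joint pseudorandomness.} Your PRG hybrid needs the scalar $PRG(b_n)$ in $Com_n$ and the vector $\mathbf{PRG}(b_n)$ in $\tilde{\mathbf g}_n$ to be \emph{jointly} pseudorandom, e.g., disjoint blocks of one stream of length $M+1$. Suppose instead the scalar were the first coordinate of the vector. With no dropouts, the pairwise masks cancel in both sums. The server then obtains $\sum_n Com_n-(\tilde{\mathbf g})_1=\mathbf V\cdot\mathbf g-(\mathbf g)_1$, a linear function of the aggregate.

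\emph{Faithful key relaying.} Your closing claim that every deviation of $S$ either aborts or is harmless is too strong. EIFL has no PKI, and the server alone relays the Round-0 public keys. A server that substitutes its own $P_m^{PK}$, acting as a man-in-the-middle on both sides, learns $k_{n,m}$. From that it obtains $SK_n$ and all shares, and can then open every $R_n$. Your $\mathrm{Hyb}_1$, like the paper's, applies DDH only to keys genuinely exchanged between honest clients. So the argument covers a server that relays public keys faithfully. That matches the paper's privacy threat model (a server that ``tries to infer the output privacy using known information''), but not a fully active adversary.
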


\begin{proof} We use the hybrid argument~\cite{DBLP:cite31} to prove the theorem. The key step to protect output privacy is {\bf Round 4}, in which each client uploads the ciphertexts that need to be decrypted for calculating the clients' model aggregation. Our main proof idea is to replace these ciphertexts and use the hybrid argument to prove that the ciphertexts before and after replacement is computationally indistinguishable to the server. Since we exclude the involvement of the clients, the view of the server does not depend on the real aggregation output of clients.

${\rm Hyb_0}$: The \emph{SIM} random variable is distributed exactly as $REAL$, the view of the server in a real execution of the protocol.

${\rm Hyb_1}$: In this hybrid, the simulator changes the behavior of all clients $n$, where $n\in C_2$. Specifically, for each client $n$, a uniformly random number $v_{n,m}$ is selected to replace the shared key KA.Agree$\left(P_n^{SK},P_m^{PK}\right)$ between client $n$ and client $m$ in the same set, and to perform the function of encryption and decryption. The DDH assumption \cite{DBLP:cite19} ensures that this hybrid is indistinguishable from real protocol.

${\rm Hyb_2}$: In this hybrid, the simulator replaces all encrypted data (i.e., the encrypted $SK$ and encrypted shares of $b_n$ and $N_n^{SK}$) sent by all clients to other clients in {\bf Round 1} with encrypted random values (e.g., 0, with appropriate length). Since the simulator just changes the content of ciphertext, IND-CPA security of the Symmetric Encryption (SE) scheme \cite{DBLP:cite5}, \cite{DBLP:cite20} guarantees this hybrid is indistinguishable from the previous one. 

${\rm Hyb_3}$: In this hybrid, the simulator replaces all encrypted data (i.e., the encrypted its $b_n$ and encrypted shares of $b_m$ and $N^{SK}_m$ of all other clients $m$) sent in {\bf Round 4} with encrypted shares of random values (e.g., 0, with appropriate length). Since the simulator only changes the contents of the ciphertexts, IND-CPA security of our Symmetric Encryption (SE) scheme \cite{DBLP:cite5}, \cite{DBLP:cite20} guarantees that the view of this hybrid is indistinguishable from the previous one. 

Therefore, the simulator has already completed the simulation since SIM successfully simulates REAL without knowing anything about the aggregation outputs. Based on the hybrid 0 to 3, we can proves that the output of the simulator is computationally indistinguishable from the output of REAL, completing the proof.
\end{proof}

\subsection{Output Integrity}
\begin{theorem}[Output integrity]
\label{output integrity}
For any Probabilistic Polynomial Time (PPT) malicious aggregation server, the protocol output ${\bf g}$ is exactly the sum of the local gradients of all surviving clients in $C_4$, if and only if the verification equation $Com == {\bf V}\cdot {\bf g}$ holds. Formally, $$Pr\left[{\bf g}={\bf g}_{Integrity}\right] \geq 1-negl(k),$$ where:
\begin{itemize}
    \item ${\bf g}_{Integrity} = \sum_{n\in C_4}{\bf g}_n$ denotes the valid global aggregated model of surviving clients in $C_4$;
    \item $Com = \sum_{n\in C_4}Com_n$ denotes the aggregated inner product values from surviving clients;
    \item ${\bf V}$ denotes the random vector generated via pseudorandom generator ($\bf PRG$);
    \item $negl(k)$ denotes a negligible function in the security parameter $k$.
\end{itemize}
\end{theorem}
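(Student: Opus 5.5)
The plan is to establish the two directions of the equivalence separately. The substantive content lies in showing that acceptance forces ${\bf g}={\bf g}_{Integrity}$, except with negligible probability.

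\emph{Completeness.} Suppose the server follows the protocol. Then the masks cancel by construction. In (\ref{remove-mask}) the pairwise masks among $C_4$ cancel, the masks with $C_2\backslash C_4$ are removed using the reconstructed $N_u^{SK}$, and the contributions of $C_3\backslash C_4$ are removed using the resent, signed ${\bf\tilde g}_u$. Hence ${\bf g}=\sum_{u\in C_4}{\bf g}_u$. The same cancellation applied to (\ref{com-remove-mask}) gives $Com=\sum_{u\in C_4}{\bf V}\cdot{\bf g}_u$. Linearity of the inner product then yields $Com={\bf V}\cdot{\bf g}$.

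\emph{Soundness.} We first restrict the server's power. By EUF-CMA security of the signature scheme, except with negligible probability every $Com_u$ and every resent ${\bf\tilde g}_u$ accepted in Round 4 is the value honestly produced by client $u$. By IND-CTXT of {\bf SE}, every decrypted $H_v$ and every share from Round 1 is honest. By correctness of Shamir reconstruction, the recovered $b_u$ and $N_u^{SK}$, and therefore all mask terms, are correct. Consequently the only quantity the server controls is the broadcast intermediate vector. Write it as ${\bf\tilde g}'={\bf\tilde g}+{\bf e}$. Then ${\bf g}={\bf g}_{Integrity}+{\bf e}$, while $Com=\sum_{u\in C_4}{\bf V}'\cdot{\bf g}_u$ is computed honestly by the clients with ${\bf V}'={\bf PRG}(Hash({\bf\tilde g}'))$. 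Acceptance means $Com={\bf V}'\cdot{\bf g}$, which reduces to ${\bf V}'\cdot{\bf e}=0$. The theorem therefore follows once we show $\Pr[{\bf e}\neq 0\wedge {\bf V}'\cdot{\bf e}=0]\le negl(k)$.

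\emph{Main obstacle.} The difficulty is that ${\bf V}'$ is computed from ${\bf\tilde g}'$ itself, which contains ${\bf e}$. Since the seed is public, the server can evaluate ${\bf V}'$ before committing, so the standard Schwartz--Zippel argument cannot be applied directly. My first approach is to model $Hash$ (or ${\bf PRG}\circ Hash$) as a random oracle, working over a large field $\mathbb{Z}_p$ of size superpolynomial in $k$.
\begin{itemize}
\item For each query ${\bf\tilde g}'$ with ${\bf e}\neq 0$, the oracle output ${\bf V}'$ is uniform and independent of ${\bf e}$.
\item Fix a coordinate where $e_i\neq0$. Then ${\bf V}'\cdot{\bf e}=0$ holds with probability exactly $1/p$.
\item A union bound over the at most $q(k)$ oracle queries of a PPT server gives success probability at most $q(k)/p=negl(k)$.
\end{itemize}
Collision resistance of SHA-256 handles the remaining case. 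If the server finds ${\bf\tilde g}'\neq{\bf\tilde g}$ with equal hash, it has produced a collision, which happens with probability $negl(k)$; the analysis is otherwise unaffected.

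Summing the negligible failure probabilities from the signature, encryption, collision, and inner-product events gives $\Pr[{\bf g}={\bf g}_{Integrity}]\ge1-negl(k)$ whenever verification passes. Together with completeness, this establishes the equivalence.
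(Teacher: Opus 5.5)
\emph{There is a genuine gap.} Your skeleton matches the paper's: completeness by mask cancellation and linearity, and soundness by reducing acceptance to ${\bf V}'\cdot{\bf e}=0$. You also make explicit, via EUF-CMA and ciphertext integrity, the paper's tacit claim that the mask term is outside the server's control. The last step is where you differ. The paper asserts that under collision resistance and PRG security exhaustive search is the only attack. You instead model ${\bf PRG}\circ Hash$ as a random oracle over a prime field and union-bound over queries. That is the more defensible route, for two reasons. The standard-model assumptions alone do not exclude a forging ${\bf e}$: a secure PRG can be altered to output ${\bf 0}$ on a negligible set of seeds, and a collision-resistant hash altered to land there on easily found inputs, without breaking either property. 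And the prime modulus is essential: over $\mathbb{Z}_{2^l}$, the perturbation ${\bf e}=2^{l-1}{\bf u}_i$ (with ${\bf u}_i$ the $i$-th unit vector) passes whenever $V'_i$ is even.

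However, your union bound rests on the sentence ``the only quantity the server controls is the broadcast intermediate vector,'' and that is false. The server also chooses the survivor set $C_3\subseteq C_2$; clients accept any list of size at least $t$, since omissions look like dropouts. It announces $C_3$ together with ${\bf\tilde g}'$, i.e.\ after it has already evaluated the public ${\bf V}'={\bf PRG}(Hash({\bf\tilde g}'))$. Because ${\bf e}={\bf\tilde g}'-\sum_{u\in C_3}{\bf\tilde g}_u$ depends on that choice, a query does not fix ${\bf e}$, and ${\bf V}'$ is not independent of it.

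Concretely, the attack runs as follows.
\begin{enumerate}
\item The server holds every ${\bf\tilde g}_u$ with $u\in C_2$. It sets ${\bf\tilde g}'=\sum_{u\in C_2}{\bf\tilde g}_u+{\bf w}$ and computes $a_u={\bf V}'\cdot{\bf\tilde g}_u$.
\item It searches (e.g.\ by meet-in-the-middle) for $D\subseteq C_2$ with $|C_2\backslash D|\ge t$ and $\sum_{u\in D}a_u=-{\bf V}'\cdot{\bf w}$.
\item It declares $C_3=C_4=C_2\backslash D$. Every client then accepts the wrong output ${\bf g}={\bf g}_{Integrity}+{\bf w}+\sum_{u\in D}{\bf\tilde g}_u$.
\end{enumerate}
Each query therefore offers up to $2^{|C_2|}$ chances, not one. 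The honest union bound is $(q+1)\,2^{|C_2|}/p$. This is negligible only if $\log_2 p$ exceeds $|C_2|$ by a superlogarithmic margin. Otherwise you would need a hardness assumption for random modular subset sum, which your proof never invokes.

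The cleanest repair is to bind the set into the challenge, ${\bf V}={\bf PRG}(Hash({\bf\tilde g}\,\|\,C_3))$. Each oracle query then determines ${\bf e}$, and your $q/p$ argument goes through verbatim. Alternatively, require $p\ge 2^{|C_2|+k}$ and union-bound over pairs (query, $C_3$).

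Two further assumptions are hidden in your reduction and must be stated or enforced.
\begin{itemize}
\item All clients must receive the same ${\bf\tilde g}'$ and the same lists $C_3,C_4$. If the server equivocates, the unmasked value is $\sum_{u\in C_4}{\bf V}_u\cdot{\bf g}_u$ with client-dependent ${\bf V}_u$, and acceptance no longer reduces to a single equation ${\bf V}'\cdot{\bf e}=0$. Having each client sign $Hash({\bf\tilde g}')\,\|\,C_3$ along with $Com_n$, and checking these in Round 4, would detect this.
\item Your EUF-CMA and ciphertext-integrity steps presuppose that the public keys the server relays in Round 0 are authentic, i.e.\ a PKI. The protocol does not provide one.
\end{itemize}
The paper's proof has all of these blind spots too; it treats ${\bf\tilde g}$ as ``a constant vector that can be computed by the server.'' So it does not supply the missing argument either.
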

\begin{proof}
We prove this theorem under two standard cryptographic assumptions:
\begin{itemize}
    \item the $Hash$ function is collision-resistant (CRHF);
    \item the pseudo-random generator $\bf PRG$ is cryptographically secure (i.e., its output is computationally indistinguishable from a uniformly random vector for any PPT adversary).
\end{itemize}
We split the proof into two core parts: {\bf completeness} (honest execution always passes verification) and {\bf soundness} (a PPT malicious server cannot forge a verifiable invalid output, except with negligible probability).

\subsubsection{Completeness (Honest Execution Passes Verification)}
We first prove that the honest protocol execution will always produce an output that satisfies the verification equation.
\begin{itemize}
    \item Each client $n \in C_2$ computes the masked input vector ${\bf\tilde g}_n$ according to equation \ref{local-mask}, and uploads it to the server. The server honestly sums all these masked vectors of clients in $C_3$ to get the valid intermediate result ${\bf\tilde g} = \sum_{m\in C_3}{\bf\tilde g}_m$.
    \item Each client $n \in C_3$ computes the masked inner product value $Com_n$ according to equation \ref{com-mask}. For the clients that dropped out during uploading masked inner product values in {\bf Round 3}, surviving clients offset these clients' masked input vectors, and sum all the original input vectors of clients in $C_4$ as ${\bf g} =\sum_{u\in C_4}{\bf g_u}$ according to equation \ref{remove-mask}.
    \item Only the masked inner product values of clients in $C_4$ are uploaded to the server. According to equation \ref{com-remove-mask}, clients can calculate the sum of the real inner product values $Com = \sum_{u\in C_4}{\bf V}\cdot{\rm\bf g_u}$ .
\end{itemize}
By the linearity of the inner product, we have:
{
\begin{align*}
Com = \sum_{u\in C_4}{\bf V}\cdot{\rm\bf g_u} = {\bf V}\cdot\sum_{u\in C_4}{\rm\bf g_u} = {\bf V}\cdot{\rm\bf g}.
\nonumber
\end{align*}
} 
This proves that the honest protocol output always satisfies the verification equation, completing the proof of completeness.

\subsubsection{Soundness (Unforgeability of Invalid Outputs)} We now prove that no PPT malicious server can forge an invalid intermediate result ${\bf\tilde g}_f \neq {\bf\tilde g}$ that passes the client-side verification, except with negligible probability. First, the verification equation for a forged intermediate result is:
\begin{equation}
    \begin{aligned}
        {\bf V}_f\cdot{\bf g}_f = {\bf V}_f\cdot{\rm\bf g},
    \end{aligned}
\end{equation}
where the vector ${\bf V}_f ={\bf PRG}\left(Hash\left({\bf\tilde g}_f\right)\right)$ is uniquely determined by the forged intermediate result ${\bf\tilde g}_f$, ${\bf g}_f$ is the incorrect output derived from the forged intermediate result ${\bf\tilde g}_f$, and ${\bf g}$ is the correct output derived from the valid intermediate result ${\bf\tilde g}$. According to equation \ref{remove-mask}, we note that the mask removal process is independent of the server’s intermediate result: the mask term is computed solely from the clients’ pre-shared secret seeds and the set of surviving clients, and is identical for both the honest and forged intermediate results. For notational convenience, we denote the mask term as:
\begin{align*}
{\bf mask} = - \sum_{u\in C_3\backslash C_4}{\bf g_u^-} -\sum_{u\in C_4}{\rm\bf PRG}\left(b_u\right) \\
- \sum_{u\in C_4,v\in C_2\backslash C_4:u<v}{\rm\bf PRG}\left(s_{u,v}\right)\\
        + \sum_{u\in C_4,v\in C_2\backslash C_4:u>v}{\rm\bf PRG}\left(s_{v,u}\right)
\nonumber
\end{align*}
Therefore, the incorrect output vector ${\bf g}_f$ and the correct output vector ${\bf g}$ can be expressed as
\begin{align*}
    {\bf g}_f={\bf\tilde g}_f+{\bf mask},{\bf g} = {\bf\tilde g}+{\bf mask},
\end{align*}
and the verification equation can be expressed as
\begin{align*}
{\bf V}_f\cdot({\bf\tilde g}_f+{\bf mask}) = {\bf V}_f\cdot({\bf\tilde g}+{\bf mask}),
\end{align*}
Substituting into the verification equation and canceling the common ${\bf V}_f \cdot{\bf mask}$ term via inner product linearity, the adversary’s forgery goal reduces to finding ${\bf\tilde g}_f \neq {\bf\tilde g}$ that satisfies:
\begin{equation}
\begin{aligned}
{\bf V}_f\cdot{\bf\tilde g}_f = {\bf V}_f\cdot{\bf\tilde g}
\end{aligned}
\label{verification-equation}
\end{equation}
We observe that ${\bf V}_f$ is generated from ${\bf\tilde g}_f$, and ${\bf\tilde g}$ is a constant vector that can be computed by the server, which means that there is only one unknown vector in the equation \ref{verification-equation}. However, under our cryptographic assumptions, ${\bf V}_f$ is uniquely determined by ${\bf\tilde g}_f$ and is computationally indistinguishable from a uniform random vector. A PPT adversary cannot manipulate the linear relationship between ${\bf V}_f$ and ${\bf\tilde g}_f$, nor express ${\bf V}_f$ as a closed-form function of ${\bf\tilde g}_f$. The only feasible attack strategy is exhaustive search over the vector space, which has a super-polynomial time complexity of $O(2^{lm})$ (where $l$ is the bit length of each vector element and $M$ is the vector dimension), and is thus infeasible for any PPT adversary. The success probability of the forgery attack is therefore negligible.

In conclusion, honest execution always passes verification, and any forged invalid output will be detected by clients with overwhelming probability. This completes the proof.
\end{proof}

\section{Evaluation}
\label{Evaluation}
In this section, we evaluate the performance of our EIFL in terms of effectiveness, computation overhead, and communication overhead.
\begin{table*}
  \small
  \centering
  \caption{Performance with Different Numbers of Clients $N$ and Model Dimensions $M$}
  \addtolength{\tabcolsep}{-1pt}
  
  \begin{tabular}{lccccccccc}
    \toprule
     &$M$ & & Round 0 & Round 1 & Round 2 & Round 3  & Round 4 & Round 5 & Communication \\
     \midrule
    \multirow{10}{*}{$N=20$} & \multirow{2}{*}{$2*10^4$}
    &Client  & 289 ms & 11 ms & 202 ms & \textcolor{blue}{36} ms & 0.2+\textcolor{blue}{9}+\textcolor{orange}{0.2} ms & 162+\textcolor{blue}{6}+\textcolor{orange}{0.5} ms  & 135.81+\textcolor{blue}{1.56} KB\\
    & &Server   & - & - & - & 0.4 ms & - & - & 2.68 MB \\
    \cmidrule(lr){2-10}
    & \multirow{2}{*}{$4*10^4$}
    &Client   & 281 ms & 12 ms & 372 ms & \textcolor{blue}{47} ms & 0.2+\textcolor{blue}{9}+\textcolor{orange}{0.3} ms & 318+\textcolor{blue}{13}+\textcolor{orange}{0.5} ms  & 253+\textcolor{blue}{1.56} KB\\
    & &Server    & - & - & - & 0.9 ms & - & - & 4.97 MB \\
    \cmidrule(lr){2-10}
    & \multirow{2}{*}{$6*10^4$}
    &Client   & 286 ms & 12 ms & 538 ms & \textcolor{blue}{62} ms & 0.2+\textcolor{blue}{9}+\textcolor{orange}{0.3} ms & 477+\textcolor{blue}{17}+\textcolor{orange}{0.5} ms  & 370.19+\textcolor{blue}{1.56} KB\\
    & &Server    & - & - & - & 1.2 ms & - & - & 7.26 MB \\
    \cmidrule(lr){2-10}
    & \multirow{2}{*}{$8*10^4$}
    &Client   & 286 ms & 12 ms & 701 ms & \textcolor{blue}{75} ms & 0.2+\textcolor{blue}{9}+\textcolor{orange}{0.3} ms & 649+\textcolor{blue}{22}+\textcolor{orange}{0.5} ms  & 487.38+\textcolor{blue}{1.56} KB\\
    & &Server    & - & - & - & 1.5 ms & - & - & 9.55 MB \\
    \cmidrule(lr){2-10}
    & \multirow{2}{*}{$10^5$}
    &Client   & 275 ms & 12 ms & 892 ms & \textcolor{blue}{90} ms & 0.2+\textcolor{blue}{9}+\textcolor{orange}{0.3} ms & 811+\textcolor{blue}{29}+\textcolor{orange}{0.5} ms  & 604.56+\textcolor{blue}{1.56} KB\\
    & &Server    & - & - & - & 2 ms & - & - & 11.84 MB \\
    
    \midrule
    \multirow{10}{*}{$N=100$} & \multirow{2}{*}{$2*10^4$}
    &Client  & 283 ms & 61 ms & 849 ms & \textcolor{blue}{35} ms & 1+\textcolor{blue}{46}+\textcolor{orange}{1} ms & 797+\textcolor{blue}{6}+\textcolor{orange}{2} ms  & 460.81+\textcolor{blue}{7.81} KB\\
    & &Server   & - & - & - & 2 ms & - & - & 45.76 MB \\
    \cmidrule(lr){2-10}
    & \multirow{2}{*}{$4*10^4$}
    &Client   & 276 ms & 61 ms & 1636 ms & \textcolor{blue}{48} ms & 1+\textcolor{blue}{46}+\textcolor{orange}{1} ms & 1594+\textcolor{blue}{12}+\textcolor{orange}{2} ms  & 578+\textcolor{blue}{7.81} KB\\
    & &Server    & - & - & - & 3.9 ms & - & - & 57.21 MB \\
    \cmidrule(lr){2-10}
    & \multirow{2}{*}{$6*10^4$}
    &Client   & 286 ms & 60 ms & 2441 ms & \textcolor{blue}{62} ms & 2+\textcolor{blue}{46}+\textcolor{orange}{1} ms & 2386+\textcolor{blue}{17}+\textcolor{orange}{2} ms  & 703+\textcolor{blue}{7.81} KB\\
    & &Server    & - & - & - & 6 ms & - & - & 68.65 MB \\
    \cmidrule(lr){2-10}
    & \multirow{2}{*}{$8*10^4$}
    &Client   & 291 ms & 61 ms & 3275 ms & \textcolor{blue}{76} ms & 2+\textcolor{blue}{45}+\textcolor{orange}{1} ms & 3309+\textcolor{blue}{23}+\textcolor{orange}{2} ms  & 812.38+\textcolor{blue}{7.81} KB\\
    & &Server    & - & - & - & 8 ms & - & - & 80.10 MB \\
    \cmidrule(lr){2-10}
    & \multirow{2}{*}{$10^5$}
    &Client   & 268 ms & 61 ms & 4035 ms & \textcolor{blue}{89} ms & 2+\textcolor{blue}{46}+\textcolor{orange}{1} ms & 4281+\textcolor{blue}{31}+\textcolor{orange}{2} ms  & 937.38+\textcolor{blue}{7.81} KB\\
    & &Server    & - & - & - & 10 ms & - & - & 91.54 MB \\

    \bottomrule
  \end{tabular}
  
  \label{tbl:running time}
\end{table*}

\subsection{Experimental Setup}
\label{Experimental Setup}
We show the performance of our EIFL based on a prototype written in Python with the following cryptographic settings:
\begin{itemize}
    \item For key agreements protocol, we use elliptic curve Diffie-Hellman~\cite{DBLP:cite18} over the NIST P-256 curve.
    \item For secret sharing, we adopt the standard Shamir’s $t$-out-of-$N$ secret sharing~\cite{DBLP:cite16}.
    \item For one-time signature scheme, we use the ECDSA digital signature scheme based on SHA-256 with 512-bit one-time signature secret key.
    \item For authenticated encryption, we use AES in counter mode with 256-bit secret key
\end{itemize}

We run a single-threaded simulation on a Ubuntu 20.04.4 LTS desktop with Intel(R) Xeon(R) w5-3435X CPU, 192GB memory and an NVIDIA RTX 4090 GPU.


\subsection{Model Accuracy and Convergence Rounds}
We follow the settings of the work \cite{rlr} and use the same model and datasets. Specifically, we use a 5-layer convolutional neural network consisting of about 1.2M parameters with the following architecture: two layers of convolution, followed by a layer of max-pooling, followed by two fully-connected layers with dropout. We evaluate EIFL on three image datasets, i.e., Fashion MNIST \cite{Fashion-MNIST}, CIFAR10 \cite{cifar-10} and Federated EMNIST, where the first two datasets are independent identically distributed (i.i.d.)  and the last one is non-i.i.d. distributed. 
\begin{itemize}
    \item \textbf{Fashion MNIST (i.i.d.).} We give each agent an equal number of samples from the training data via uniform sampling. We set the batch size as 256 and clients' local training epochs as 2.
    \item \textbf{CIFAR10 (i.i.d.).} We give each agent an equal number of samples from the training data via uniform sampling. We set the batch size as 256 and clients' local training epochs as 2.
    \item \textbf{Federated EMNIST (non-i.i.d.).} We use the Federated EMNIST dataset from the LEAF benchmark \cite{feder-mnist}. In this dataset, digits are distributed across 3383 users and each user has possibly a different distribution over digits. We set the batch size as 64 and clients' local training epochs as 10.
\end{itemize}
We perform end-to-end federated training across 100 participating clients in all our experiments and the results are shown in Table \ref{model accuracy}. Compared with FedAvg, our protocol exhibits a negligible degradation in classification accuracy (ACC), and requires marginally more convergence rounds (CR). This is mainly because our scheme adopts 16-bit quantization. Overall, the impact of our protocol on model accuracy and the number of convergence rounds is marginal.

\begin{figure*}
     \centering
     \begin{subfigure}[b]{0.24\textwidth}
         \centering
         \includegraphics[scale=0.28]{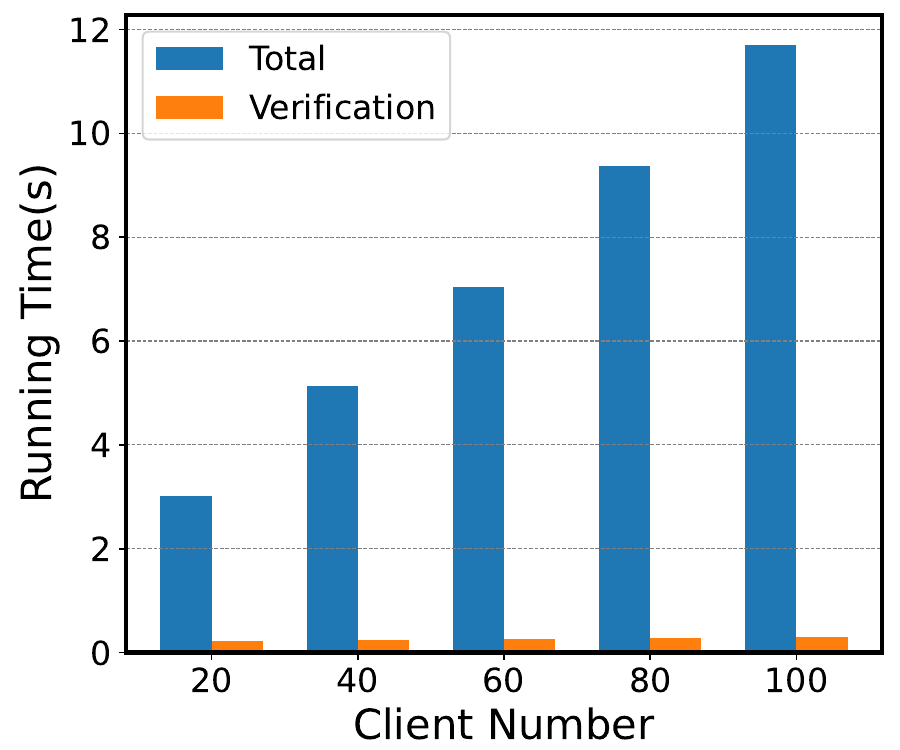}
         \caption{$M=100K$}
         \label{fig:verification-time:a}
     \end{subfigure}
     \begin{subfigure}[b]{0.24\textwidth}
         \centering
         \includegraphics[scale=0.28]{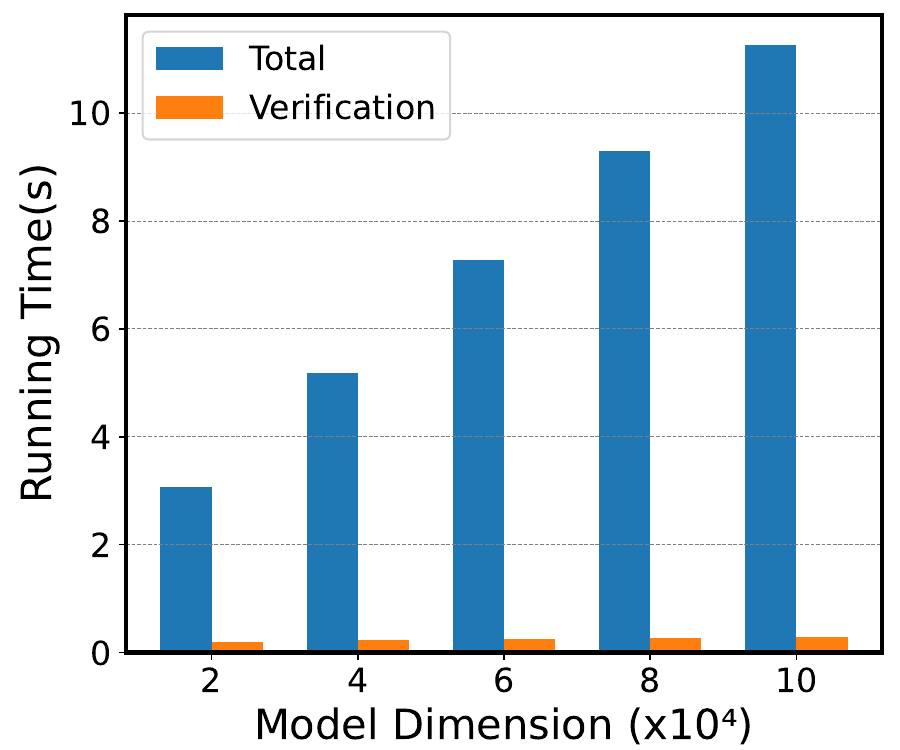}
         \caption{$N=100$}
         \label{fig:verification-time:b}
     \end{subfigure}
     \begin{subfigure}[b]{0.24\textwidth}
         \centering
         \includegraphics[scale=0.28]{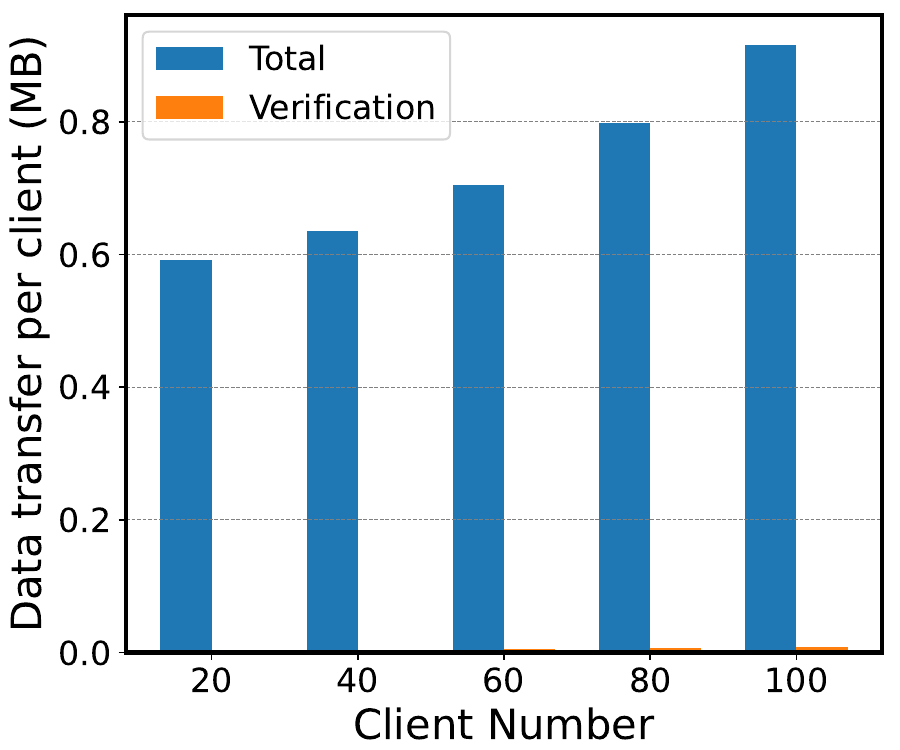}
         \caption{$M=100K$}
         \label{fig:verification-data:c}
     \end{subfigure}
     \begin{subfigure}[b]{0.24\textwidth}
         \centering
         \includegraphics[scale=0.28]{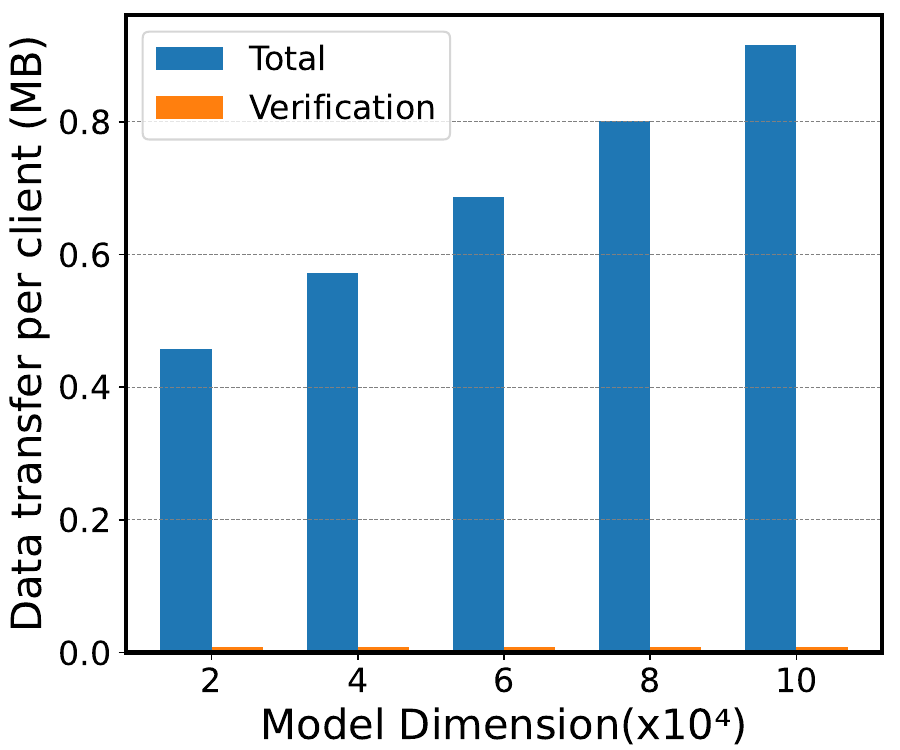}
         \caption{$N=100$}
         \label{fig:verification-data:d}
     \end{subfigure}
     \vspace{-1mm}
        \caption{Comparison of verification overhead and total overhead per client. (a) Running time with different numbers of clients $N$. (b) Running time with different model dimension $M$. (c) Data transfer with different numbers of clients $N$. (d) Data transfer with different model dimension $M$.} 
        \label{fig:verification}
\end{figure*}

\begin{table}
    \centering
    \small
    \caption{Model Accuracy and Convergence Rounds}
  \addtolength{\tabcolsep}{2pt}
  \begin{tabular}{lccccc}
    \toprule
    &\multicolumn{2}{c}{FedAvg} & \multicolumn{2}{c}{EIFL} \\
    \cmidrule(lr){2-3} \cmidrule(lr){4-5}
    & ACC & CR & ACC  & CR  \\
    \midrule
    Fashion MNIST       &0.920 &183 &0.918 &195\\
    Federated EMNIST    & 0.991 &204  &0.974 &226 \\
    CIFAR10                   & 0.790 &152     & 0.782 &168  \\
    \bottomrule
  \end{tabular}
    \label{model accuracy}
    \vspace{-3mm}
\end{table}

\subsection{Performance Analysis of EIFL}
We evaluated the computation overhead and communication overhead of EIFL in terms of runtime and data transfer volume, respectively. Table \ref{tbl:running time} shows the running time of each round and data transfer volume with different numbers of clients $N$ and model dimension $M$. Note that EIFL is designed to run on top of Secure Aggregation \cite{DBLP:cite5} and additionally ensures output integrity and output privacy. For convenience, we use \textcolor{blue}{\bf blue} and \textcolor{orange}{\bf orange} colors to represent the additional overheads incurred by these two aspects, respectively.

\subsubsection{Runtime} As illustrated in Table \ref{tbl:running time}, the additional \textcolor{blue}{overheads} for output integrity in Round 3 and Round 5 mainly increase as the model dimension increases. This is mainly because clients need to generate a random vector with the same dimension as the model and compute its inner product in Round 3, and then verify the inner product of the output vector in Round 5. By contrast, the additional \textcolor{blue}{overhead} in Round 4 stems from verifying the signatures of other clients and increases as the number of clients increases. In terms of output privacy, clients need to encrypt and send messages for output privacy to each other in Round 4, and decrypt the messages received from other clients in Round 5. Therefore, the computation \textcolor{orange}{overheads} in Round 4 and Round 5 increase as the number of clients increases. Fortunately, compared with the total overhead, the additional overheads remain insignificant. Since the server is only responsible for computing the sum of masked input vectors in Round 2, its computation overhead remains very low, even though such overhead increases as the number of clients and the model dimension increase. 

To demonstrate the superior verification performance of our EIFL, we present a comparative analysis of verification running time and total running time in Figures \ref{fig:verification-time:a} and \ref{fig:verification-time:b}. As observed, as the number of clients or the model dimension increases, the verification running time grows slowly and accounts for an increasingly smaller proportion of the total running time.


\begin{figure*}
     \centering
     \begin{subfigure}[b]{0.24\textwidth}
         \centering
         \includegraphics[scale=0.28]{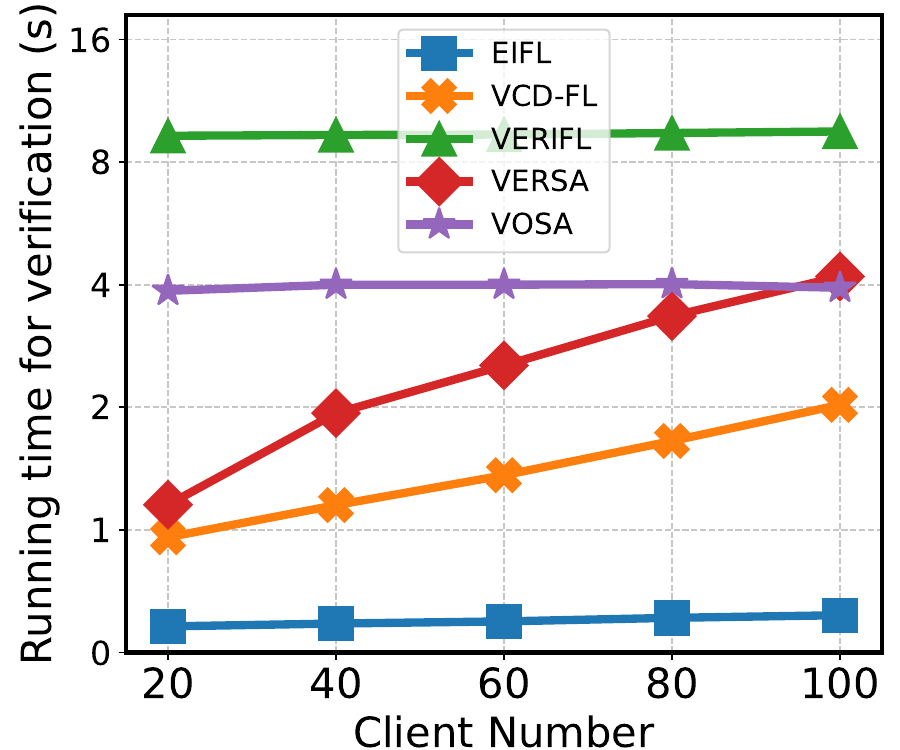}
         \caption{$M=100K$}
         \label{fig:diff_ver_time:a}
     \end{subfigure}
     \begin{subfigure}[b]{0.24\textwidth}
         \centering
         \includegraphics[scale=0.28]{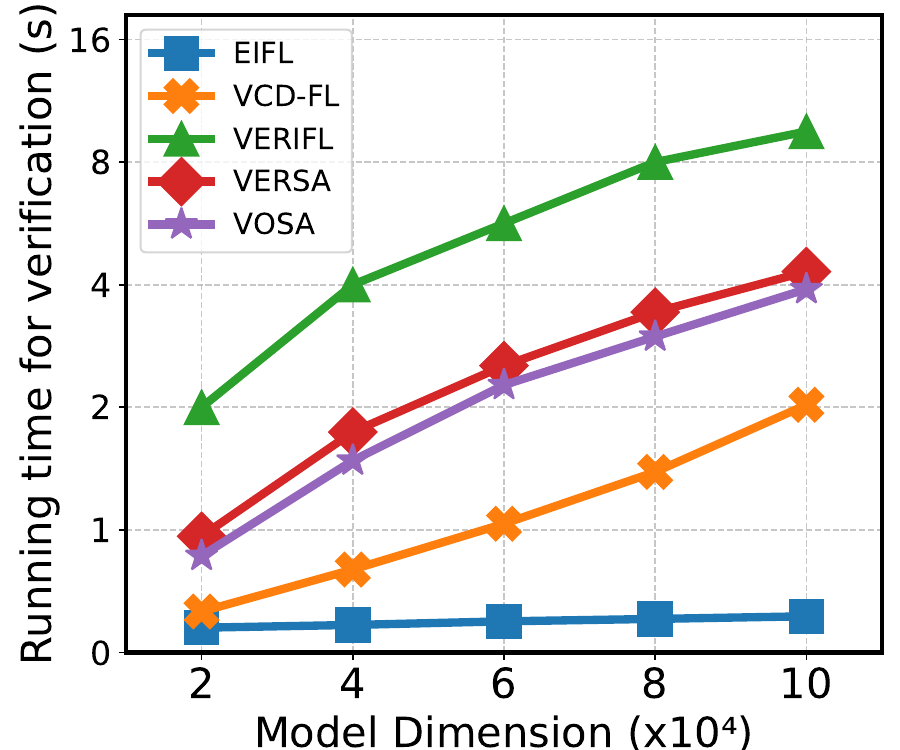}
         \caption{$N=100$}
         \label{fig:diff_ver_time:b}
     \end{subfigure}
     \begin{subfigure}[b]{0.24\textwidth}
         \centering
         \includegraphics[scale=0.28]{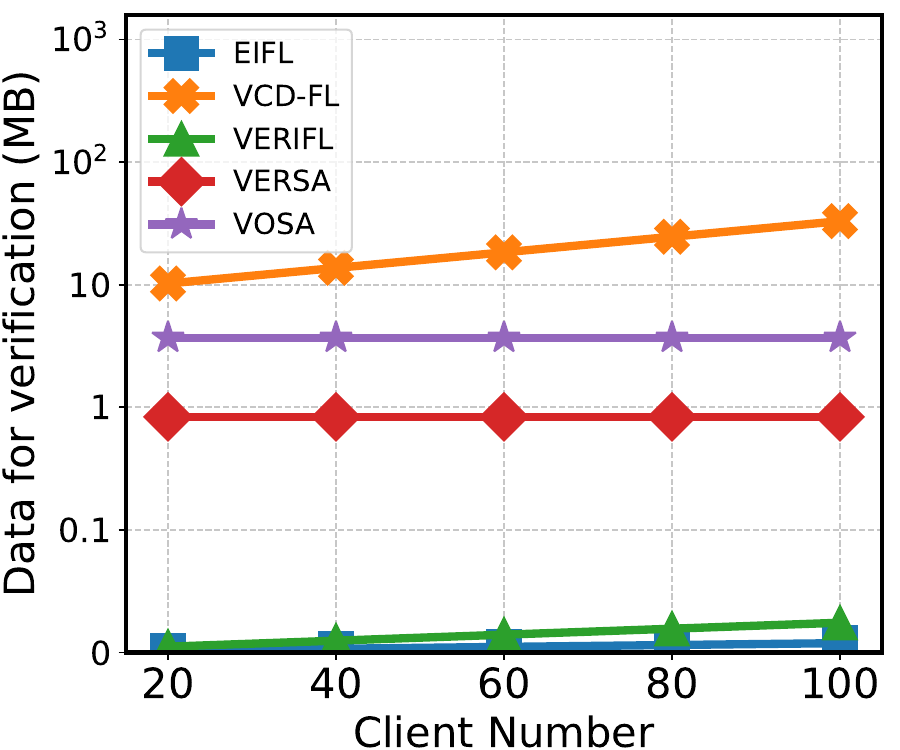}
         \caption{$M=100K$}
         \label{fig:diff_ver_com:c}
     \end{subfigure}
     \begin{subfigure}[b]{0.24\textwidth}
         \centering
         \includegraphics[scale=0.28]{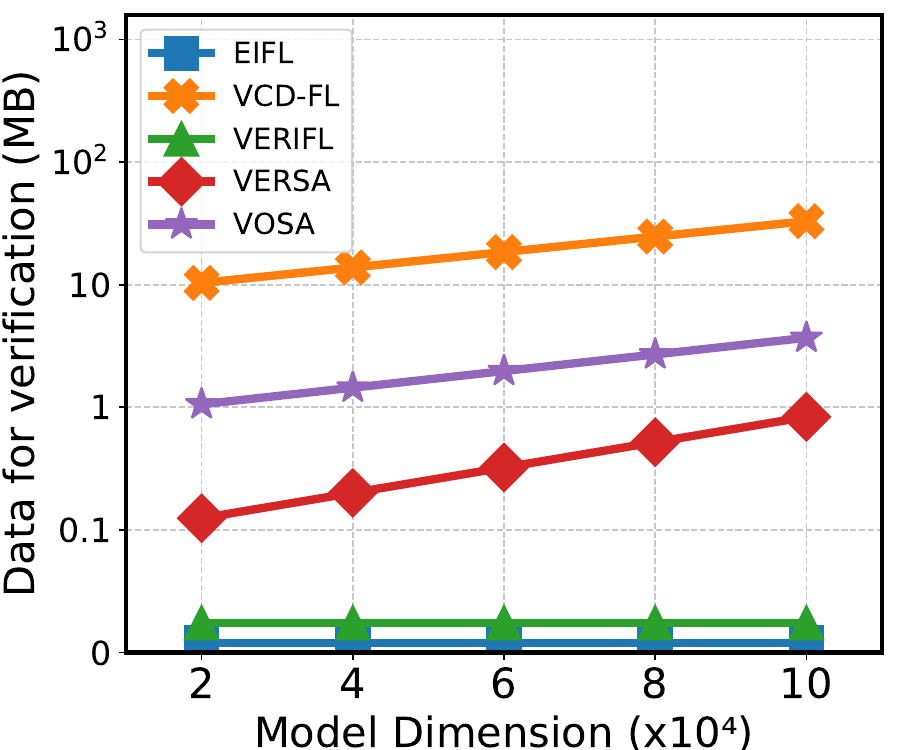}
         \caption{$N=100$}
         \label{fig:diff_ver_com:d}
     \end{subfigure}
     \vspace{-1mm}
        \caption{Comparison of verification performance. (a)Verification time with different numbers of clients $N$. (b) Verification time with different model dimension $M$. (c) Data for verification with different numbers of clients $N$. (d) Data for verification with different model dimension $M$.} 
        \label{fig:diff_ver}
\end{figure*}

\begin{table*}
  \footnotesize
  \centering
  \caption{EIFL vs. VCD-FL\cite{VCD-FL}: Enc/Dec Time with Different Numbers of Clients $N$ and Model Dimensions $M$}
    \vspace{-2mm}
  \addtolength{\tabcolsep}{2pt}
  \begin{tabular}{lccccccccccc}
    \toprule
    &\multicolumn{1}{c}{}  &\multicolumn{5}{c}{$N$ ($M = 100K$)} & \multicolumn{5}{c}{$M$ ($N=100$)} \\
    \cmidrule(lr){3-7} \cmidrule(lr){8-12}
            & &20 & 40 & 60 & 80 & 100  & $2*10^4$ & $4*10^4$ & $6*10^4$ & $8*10^4$ & $10^5$ \\
    \midrule
    \multirow{2}{*}{Encryption (ms)} &EIFL &0.26 & 0.52 & 0.78 & 1.05 & 1.31 & 1.33 & 1.30  & 1.33 & 1.29 & 1.32\\
    &VCD-FL & 24895 & 25888 & 27119 & 27831 & 28978 & 5385 & 10687  & 16011 & 21328 & 28961 \\
    \midrule
    \multirow{2}{*}{Decryption (ms)} &EIFL &0.54 & 1.13 & 1.73 & 2.34 & 2.88 & 2.86 & 2.83  & 2.76 & 2.86 & 2.88\\
    &VCD-FL &2151  &2127 & 2172 & 2124 & 2122 & 424 & 853 & 1281 & 1707 & 2122 \\
    \midrule
    \multirow{2}{*}{Total Time (ms)}
    & EIFL & 0.80 & 1.65 & 2.51 & 3.39 & 4.19 & 4.19 & 4.13 & 4.09 & 4.15 & 4.20 \\
    & VCD-FL & 27046 & 28015 & 29291 & 29955 & 31100 & 5809 & 11540 & 17292 & 23035 & 31083 \\
    \bottomrule
  \end{tabular}
  \label{tbl:diff}
\end{table*}

\begin{figure*}
     \centering
     \begin{subfigure}[b]{0.24\textwidth}
         \centering
         \includegraphics[scale=0.28]{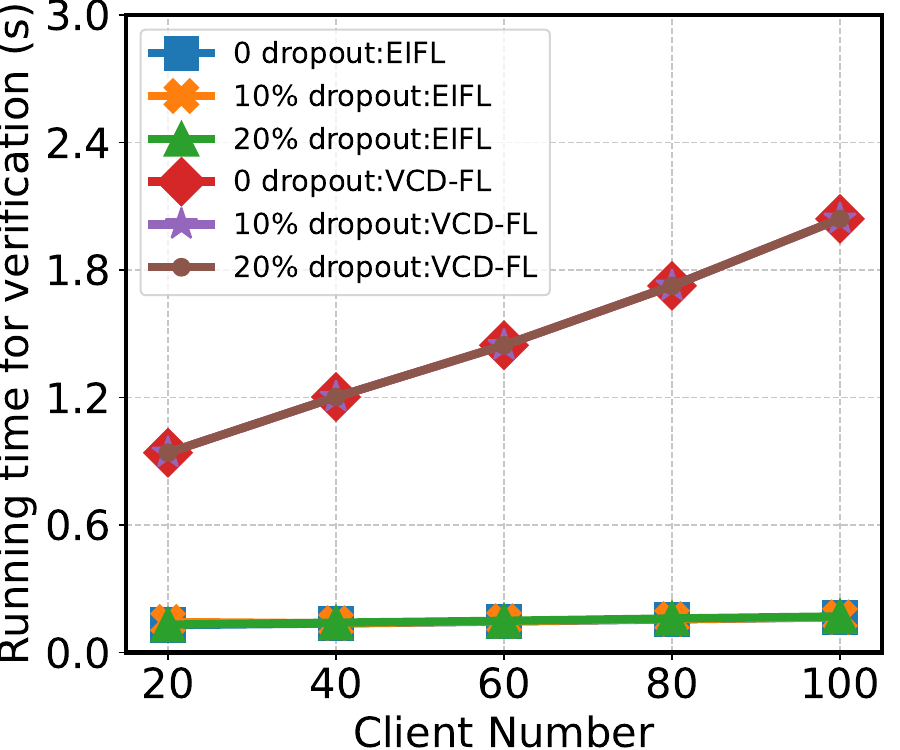}
         \caption{$M=100K$}
         \label{fig:client-runtime:a}
     \end{subfigure}
     \begin{subfigure}[b]{0.24\textwidth}
         \centering
         \includegraphics[scale=0.28]{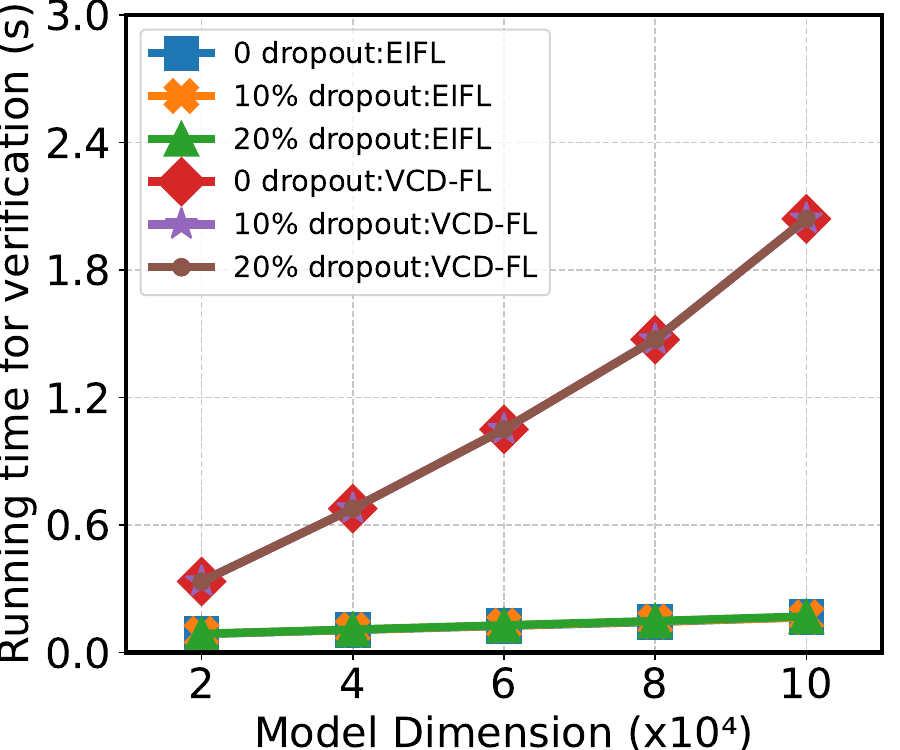}
         \caption{$N=100$}
         \label{fig:client-runtime:b}
     \end{subfigure}
     \begin{subfigure}[b]{0.24\textwidth}
         \centering
         \includegraphics[scale=0.28]{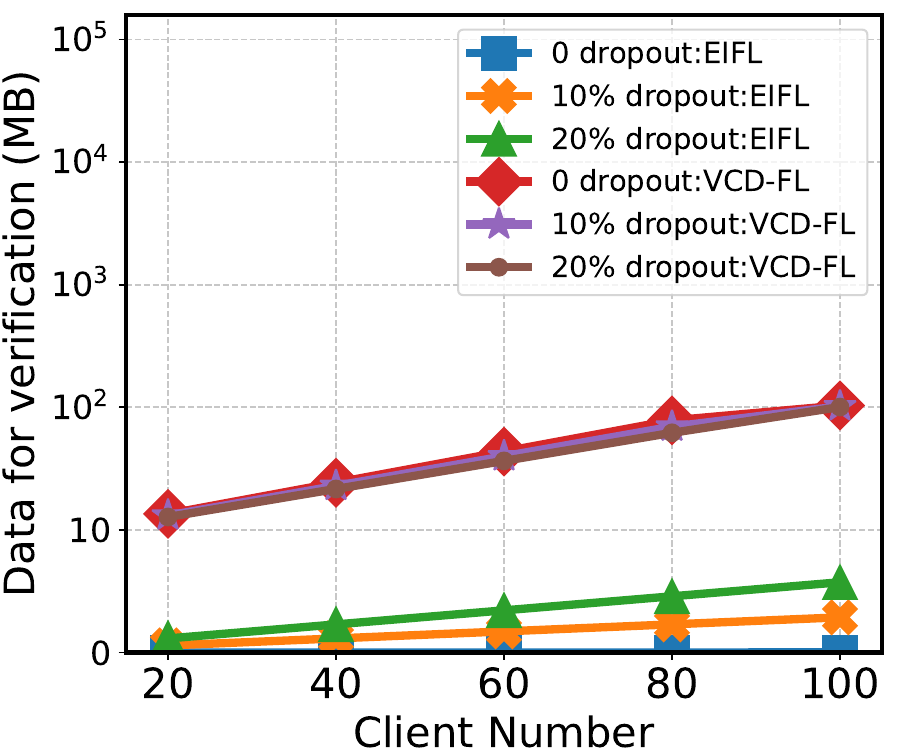}
         \caption{$M=100K$}
         \label{fig:client-data:c}
     \end{subfigure}
     \begin{subfigure}[b]{0.24\textwidth}
         \centering
         \includegraphics[scale=0.28]{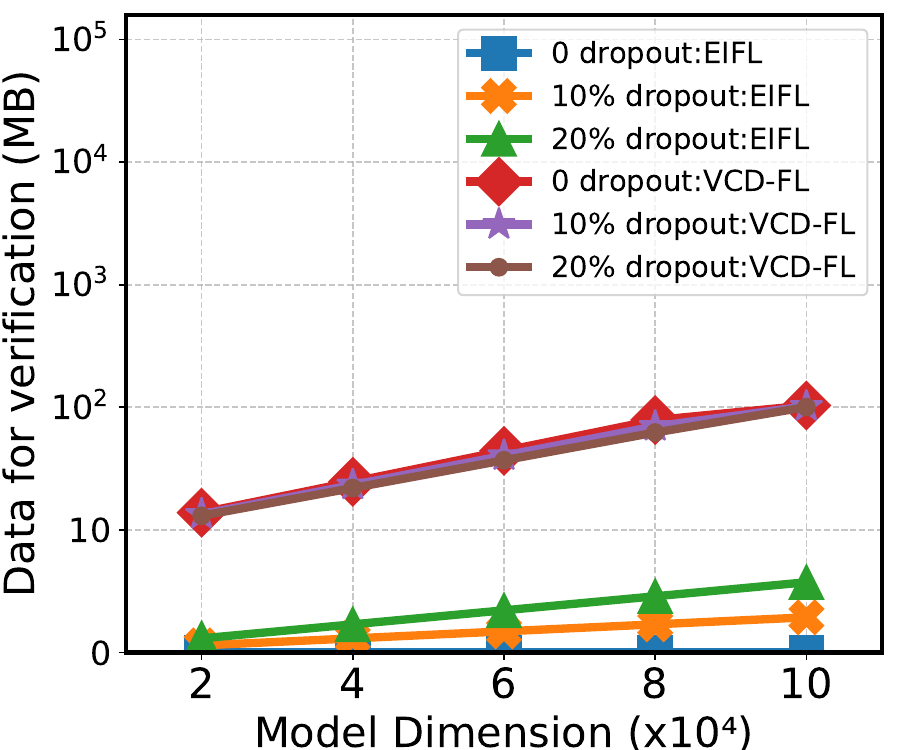}
         \caption{$N=100$}
         \label{fig:client-data:d}
     \end{subfigure}
     \vspace{-1mm}
        \caption{Comparison of verification performance with different client verification dropout rates. (a) Running time with different numbers of clients $N$. (b) Running time with different model dimension $M$. (c) Data transfer with different numbers of clients $N$. (d) Data transfer with different model dimension $M$.} 
        \label{fig:verify}
        \vspace{-4mm}
\end{figure*}

\subsubsection{Data transfer volume} 
Figures \ref{fig:verification-data:c} and \ref{fig:verification-data:d} illustrate the comparison between the verification data transfer volume and the total data transfer volume. Combined with the results summarized in Table \ref{tbl:running time}, we observe that the total data transfer volume increases with the increase in the number of clients or the model dimension. As the number of clients increases, the additional \textcolor{blue}{overhead} introduced by output integrity verification grows slowly and is negligible compared with the total communication of the protocol. More notably, the data for verification remains constant as the model dimension increases. Recall that the communication related to verification includes: (1) Sending the masked inner product value $Com_n$ and its signature $\sigma_n^c$ to the server, (2) Receiving other clients' masked inner product values and signatures from the server. Since the length of the masked inner product value and signature is fixed, the communication complexity for verification is $O(N)$. This theoretical result directly explains the extremely lightweight nature of the verification communication overhead of EIFL.

\subsection{Comparison with Previous Work}
\subsubsection{Verification performance for output integrity} 
We compare EIFL with existing SOTA schemes including VERIFL \cite{DBLP:cite8}, VERSA \cite{vsersa}, VOSA \cite{VOSA} and VCD-FL \cite{VCD-FL}. As illustrated in Figures \ref{fig:diff_ver_time:a} and \ref{fig:diff_ver_time:b}, as the number of clients and the model dimension increase, our EIFL consistently maintains the shortest verification time. VERIFL has the longest verification time because the linearly homomorphic hash it uses is realized over the NIST P-256 elliptic curve. Figures \ref{fig:diff_ver_com:c} and \ref{fig:diff_ver_com:d} present the communication overhead for verification. We can observe that EIFL has the smallest communication overhead. Since both EIFL and VERIFL convert input vectors into a single evidential value, their communication overhead is independent of the model dimension and only increases as the number of clients increases. By contrast, VCD-FL requires each client to convert its input vector into a commitment vector of the same dimension and exchange them with each other, resulting in significant communication overhead. Similarly, VERSA and VOSA require each client to generate an evidential vector with the same dimension as its original input vector and upload it to the server, which means that their communication overhead is primarily influenced by the model dimension.

\subsubsection{Addressing the verification vulnerability and dropout clients} To address the verification vulnerability, EIFL and VCD-FL respectively propose auxiliary information negotiation methods to assist verification. Since clients in EIFL agree on the auxiliary information and upload evidence (i.e., masked inner product values) only after the server completes aggregation, this implies that clients who have uploaded input vectors but dropped out before uploading evidence cannot participate in the verification. EIFL addresses this issue through a resending operation, but incurs additional overheads. We evaluate the impact of dropout rate and compare EIFL with VCD-FL. As shown in Figures \ref{fig:client-data:c} and \ref{fig:client-data:d}, the dropout rate increases the communication overhead of EIFL while decreasing that of VCD-FL. This is because the server in EIFL needs to resend the masked input vectors of clients that dropped out during the verification phase to the remaining clients. Notably, clients in EIFL agree on a random vector via our proposed generation method and do not require additional communication. For VCD-FL, although the dropout clients reduce the number of peer-to-peer communications between remaining clients when agreeing on a random vector, its communication overhead remains substantial. Overall, the communication overhead of EIFL remains much lower than that of VCD-FL. From these results, it can be concluded that our proposed random vector generation method significantly reduces the communication overhead for clients to agree on a random vector. In addition, as shown in Figures \ref{fig:client-runtime:a} and \ref{fig:client-runtime:b}, EIFL outperforms VCD-FL in terms of verification time, and the impact of the dropout rate on the verification time of both schemes is negligible. 

\subsubsection{Performance for output privacy}
Since both EIFL and VCD-FL support output privacy protection, we also compare EIFL with VCD-FL in terms of encryption and decryption time. Table \ref{tbl:diff} presents the additional overheads for output privacy of EIFL and VCD-FL. We can observe that both the encryption time and the decryption time of EIFL are much shorter than those of VCD-FL. This is mainly because during the encryption phase, VCD-FL uses Lagrange interpolation to convert an input vector into polynomial coefficients, and during the decryption phase, it computes the polynomial result to retrieve the output result. For EIFL, protecting output privacy only requires clients to encrypt the information uploaded to the server using AES in Round 4, and decrypt it in Round 5. 

\section{Conclusion}
\label{conclusion}
In this paper, we focus on the output privacy and verification vulnerability of output integrity  and address the above challenges by proposing a novel \emph{provably} privacy-preserving FL method, called EIFL. Specifically, EIFL adopts a two-stage aggregation and combines it with symmetric encryption to protect the output privacy. To address the verification vulnerability of output integrity, EIFL introduces an efficient verification method based on vector inner product and a random vector generation method. Compared with other methods, EIFL innovatively binds the auxiliary information to the output integrity, and eliminates the requirement for keeping the auxiliary information confidential from the server. 
\bibliographystyle{IEEEtran}
\bibliography{IEEEabrv,main}

\end{document}